\documentclass[12pt]{article}

\usepackage{amssymb} 
\usepackage{amsmath} 
\usepackage{amsfonts} 
\usepackage{eurosym} 

\usepackage{geometry} 
\usepackage{ulem} 
\usepackage{graphicx} 
\usepackage{caption} 
\usepackage{color} 
\usepackage{setspace} 
\usepackage{sectsty} 
\usepackage{comment} 
\usepackage{footmisc} 
\usepackage{pdflscape} 
\usepackage{subfigure} 
\usepackage{array} 
\usepackage{multirow} 

\usepackage{natbib} 
\usepackage{hyperref} 

\usepackage{mathtools} 
\mathtoolsset{showonlyrefs,showmanualtags} 

\usepackage[bibliography=common]{apxproof} 

\normalem

\onehalfspacing 

\newtheorem{theorem}{Theorem}

\newtheorem{prop}{Proposition}
\newtheorem{cor}{Corollary}
\newtheorem{ass}{Assumption}
\newtheorem{remark}{Remark}

\newtheorem{modl}{Model}



\newcolumntype{L}[1]{>{\raggedright\let\newline\\arraybackslash\hspace{0pt}}m{#1}}
\newcolumntype{C}[1]{>{\centering\let\newline\\arraybackslash\hspace{0pt}}m{#1}}
\newcolumntype{R}[1]{>{\raggedleft\let\newline\\arraybackslash\hspace{0pt}}m{#1}}

\geometry{left=1.0in,right=1.0in,top=1.0in,bottom=1.0in} 

\newcommand{\E}{\mathbb{E}} 
\newcommand{\ind}{\mathbf{1}} 
\newcommand{\R}{\mathbb{R}} 

\begin{document}

\begin{titlepage}
\title{Manipulation Test for Multidimensional RDD\thanks{I am grateful to Federico Bugni and Ivan Canay for their guidance in this project. I am also thankful to all the participants of the Econometric Reading Group at Northwestern University for their comments and suggestions.}}
\author{Federico Crippa \\ Department of Economics \\ Northwestern University\\
\href{mailto:federicocrippa2025@u.northwestern.edu}{federicocrippa2025@u.northwestern.edu}
}
\date{\today}
\maketitle
\begin{abstract}
\noindent The causal inference model proposed by \cite{lee2008randomized} for the regression discontinuity design (RDD) relies on assumptions that imply the continuity of the density of the assignment (running) variable. The test for this implication is commonly referred to as the manipulation test and is regularly reported in applied research to strengthen the design's validity. The multidimensional RDD (MRDD) extends the RDD to contexts where treatment assignment depends on several running variables. This paper introduces a manipulation test for the MRDD. First, it develops a theoretical model for causal inference with the MRDD, used to derive a testable implication on the conditional marginal densities of the running variables. Then, it constructs the test for the implication based on a quadratic form of a vector of statistics separately computed for each marginal density. Finally, the proposed test is compared with alternative procedures commonly employed in applied research.\\
\vspace{0in}\\
\noindent\textit{Keywords:} Regression Discontinuity Design, Manipulation Test, Multidimensional RDD\\
\noindent\textit{JEL classification codes:} C12, C14
\vspace{0in}\\

\bigskip
\end{abstract}
\setcounter{page}{0}
\thispagestyle{empty}
\end{titlepage}
\pagebreak \newpage

\doublespacing

\section{Introduction}
Regression Discontinuity Design (RDD) is widely used in policy evaluation and causal inference analysis to establish credible causal relationships under mild assumptions. RDD requires that units are assigned to a treatment based on some observable characteristic, the running variable: the probability of being treated must discontinuously change when the value of the running variable exceeds a certain threshold, called the cutoff. The fact that policies are often designed in this way (scholarship for students with GPA exceeding a threshold, welfare benefits for households with a certain income, etc.) explains RDD popularity\footnote{See \cite{abadie2018econometric}, \cite{cattaneo2019practicalf}, and \cite{cattaneo2019practicale} for recent comprehensive reviews on RDD applications, identification, estimation, and inference.}.

To identify the average treatment effect at the cutoff, the model for causal inference with the RDD proposed by \cite{lee2008randomized} requires assumptions on unobservable potential outcomes. Albeit these assumptions are not directly testable, the model entails two implications on observable quantities, which can be tested.
The first implication requires continuity at the cutoff for the probability density function of the running variable. The second imposes continuity at the cutoff for the conditional (on the running variable) expectation of additional observable characteristics measured before the treatment. Tests of these implications provide evidence of the RDD validity and are commonly reported in empirical applications, as highlighted by the survey in \cite{canay2018approximate}. The one for the continuity of the density of the running variable is known as the manipulation test since it checks that units do not manipulate their scores to get assigned to treatment. Several manipulation tests have been proposed in the literature, from the seminal one by \cite{mccrary2008manipulation} to more recent approaches by \cite{cattaneo2020simple} and \cite{bugni2021testing}.

This paper introduces a manipulation test for the multidimensional RDD (MRDD), valid for both cases of perfect (sharp MRDD) and imperfect (fuzzy MRDD) compliance. MRDD is a model where the treatment assignment depends on multiple running variables. I consider the version of MRDD where the probability of receiving the treatment changes discontinuously when all the running variables exceed their cutoffs, and the cutoff of each running variable is fixed\footnote{This shape of the assignment region is the most popular in practice (see references below), but may exclude the spatial RDD.}. Compared to the single-dimensional RDD, the main novelty is that the cutoff is not a point in a single-dimensional space but a set of infinite points in the multidimensional space of the running variables. Consider, for example, a scholarship for students who score above certain thresholds in language and mathematics tests, as illustrated in Figure \ref{fig:mrddintro}. The MRDD allows the researcher to identify and estimate the average effect of the scholarship on students with scores at the boundary. In this case, the cutoff is the solid purple boundary in the bi-dimensional space of language and math scores.

\begin{figure}
\begin{center}
\resizebox{0.8\textwidth}{!}{
\includegraphics{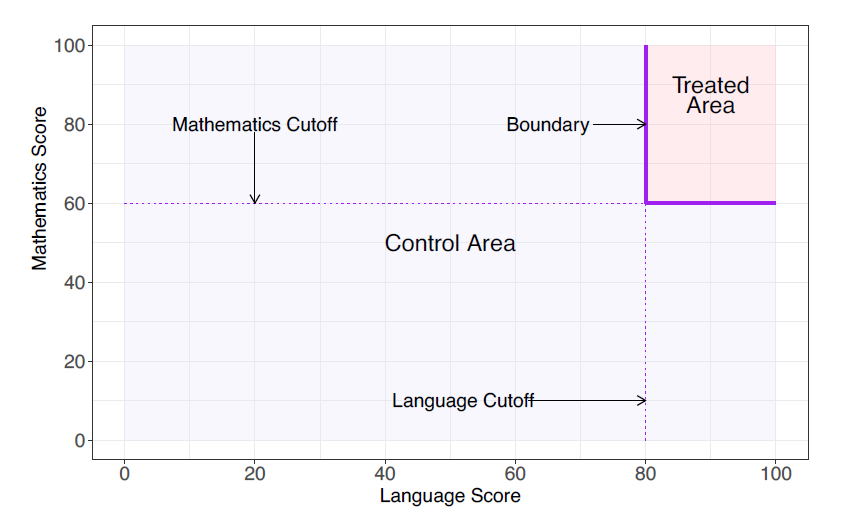}
}
\caption{MRDD running variables and thresholds. Source: \cite{cattaneo2019practicale}}
\label{fig:mrddintro}
\end{center}
\end{figure}

The main contributions of this paper are the following: first, I extend the \cite{lee2008randomized}'s model to the multidimensional setting and derive a testable implication on the conditional marginal densities of the running variables. Then, I construct a manipulation test for the implication, which helps corroborate the MRDD's credibility. Intuitively, the test procedure first divides the space of running variables into subspaces where only one variable determines the treatment assignment. In each subspace, the model implication requires the marginal density of the now single-running variable to be continuous at its single-dimensional threshold. I then obtain a set of conditions on the continuity of conditional marginal densities of all running variables and propose a multidimensional manipulation test based on a quadratic form of the test statistics considered by \cite{cattaneo2020simple} for the single-dimensional case, computed for each implication. Asymptotically, these statistics converge to a multivariate normal distribution, and my test statistic to a chi-square with as many degrees of freedom as the number of running variables. Finally, I compare my procedure with alternative methods, providing a review of other approaches used in the literature.

I am not the first to study MRDD from a theoretical perspective. Identification and estimation in the MRDD setting, and how they differ from the single-dimensional RDD, have been investigated by \cite{imbens2009regression} and \cite{papay2011extending}, respectively; other results are also discussed in \cite{wong2013analyzing}, and \cite{imbens2019optimized}. So far, to my knowledge, there is no research explicitly dealing with extending the framework proposed by \cite{lee2008randomized} and discussing manipulation tests in the MRDD context. Interestingly, though, manipulation tests are run by several applied papers employing MRDD (see the survey in Table \ref{tab:survey}): they appeal to disparate approaches, with different null hypotheses, assumptions, and test statistics. None of these approaches justifies the implemented procedures\footnote{\cite{snider2015barriers} recognize that formal results are missing, asserting that \textit{Extending formal tests to check for the strategic manipulation [\dots] with a two-dimensional predictor vector is not immediately clear}.}, while my test is supported by a model and backed by statistical theory. Local asymptotic analysis and Monte Carlo simulations confirm my test's advantages in terms of size control and power in realistic settings.

Three main strands of literature resort to MRDD as a tool for causal inference. First, it is exploited to evaluate policies that assign a treatment when more than one condition on observable continuous quantities is met. Examples can be found in several fields, mainly in education \citep{matsudaira2008mandatory,clark2014signaling,cohodes2014merit,elacqua2016short,evans2017smart,smith2017giving,londono2020upstream} but also in corporate finance \citep{becht2016does}, political economy \citep{hinnerich2014democracy,frey2019cash}, development \citep{salti2022impact}, industrial organization \citep{snider2015barriers}, and public economics \citep{egger2015impact}. In these cases, MRDD provides reliable results on treatment effects from a clean identification strategy.

A second application is the geographic or spatial RDD to study the effect of treatments only assigned to specific areas. Running variables are latitude and longitude, and the boundary at which the ATE is computed coincides with actual (or historical) national, regional, or municipal borders. \cite{keele2015geographic} discuss how this setting relates to MRDD in detail. Note, however, that I am considering a model where the treatment is assigned when each running variable exceeds its cutoff: as such, my results do not directly apply to the spatial RDD, and if my test works in this setting, it is a case-specific issue.

Third, recently there has been an increasing interest in MRDD in a theory literature at the intersection of market design and machine learning \citep{abdulkadiroglu2022breaking,narita2021algorithm}. When algorithms determine a treatment assignment, they may consider multiple thresholds and running variables in a setting that mimics an MRDD. This literature is primarily theoretical, but it will likely encourage new empirical research, potentially relying on my proposed manipulation test.

The rest of the paper is organized as follows. Section \ref{sec:model} introduces the theoretical model for MRDD and derives the testable implication. Section \ref{sec:test} provides a manipulation test for the implication. Section \ref{sec:comparison} compares the manipulation test with alternative approaches used in the literature. Section \ref{sec:mc} reports Monte Carlo simulations. Section \ref{sec:app} applies the manipulation test to \cite{frey2019cash}. Section \ref{sec:fut} concludes.

\section{Model and Testable Implication}\label{sec:model}

Let $Z\in \R^d$ be a random vector of $d$ observable continuous running variables, with joint cumulative distribution function $F(z)$ and joint probability distribution function $f(z)$.

The treatment status $D$ depends on $Z$. This paper considers a sharp design where the treatment status is deterministic, but the setting can also extend to a fuzzy design where the probability of being treated changes discontinuously at the threshold. Units receive the treatment ($D=1$) when all the running variables exceed their cutoff:
\begin{gather} \label{eq:treatment_assignment}
D = D(Z) = \ind \{Z \geq c \} = \ind \{Z_1 \geq c_1 \}\ind \{Z_2 \geq c_2 \}\dots\ind \{Z_d \geq c_d \}.
\end{gather}
Without loss of generality, I consider a rescaling of $Z$ such that $c_j=0$ for all $j$. Units are treated when $Z_j \geq 0$ for all $j$.

Let $\mathcal{T}$ be the set of values of $Z$ for which $D(Z)=1$, and let $\bar{\mathcal{T}^C}$ denote the closure of the complement of $\mathcal{T}$. Define the boundary $\mathcal{B}$, the set of points with both treated and untreated units in any neighborhood, as $\mathcal{B} = \mathcal{T} \cap \bar{\mathcal{T}^C}$.

The MRDD exploits the treatment assignment mechanism described in Equation \eqref{eq:treatment_assignment} to identify the causal effect of the treatment at the boundary, where treated and untreated units are comparable and differ only in their treatment status. \citet{imbens2009regression} show that, for all $b \in \mathcal{B}$, the Conditional Average Treatment Effects $\tau(b) = \E[Y_1 -Y_0 |Z=b]$ is identified ($Y_0$ and $Y_1$ are the potential outcomes, in a standard notation). 

In Appendix \ref{app:model}, I extend the model proposed by \citet{lee2008randomized} for the RDD to the multidimensional setting and show how the assumptions of the MRDD on the unobservable potential outcomes lead to an implication on the observable distribution of $Z$. This implication is presented in the following proposition.

\begin{prop} \label{prop:impl}
{\normalfont (Testable implication)}
    To identify the causal parameter $\tau(b)$, the MRDD relies on assumptions with the following implication:
    \begin{gather}
    f_{Z_j|Z_{-j}}(z_j|z_{-j}\geq0) \text{ continuous at } z_j=0, \forall j \label{eq:condition_marginal}
\end{gather}
    where $f_{Z_j|Z_{-j}}$ is the conditional probability density function of the running variable $Z_j$, conditional on all the other running variables $Z_{-j} = (Z_1, \dots, Z_{j-1},Z_{j+1}, \dots, Z_d)$.
\end{prop}


The implication in Equation \eqref{eq:condition_marginal} requires the continuity at the boundary of the conditional marginal density of each running variable. The condition is equivalent to the continuity of the marginal density of each random variable $j$ at its own threshold ($z_j =0$), when all the other running variables are above their cutoffs ($z_{-j} \geq 0)$. In terms of restrictions on agents' behavior, the implication has a clear interpretation: for any running variable, it excludes manipulation that may switch the treatment status. 

In the next section, I propose a manipulation test for the implication in Equation \eqref{eq:condition_marginal}. It is important to stress that the implication is necessary but not sufficient: the assumptions of the MRDD model may be violated even if the condition is satisfied. This issue does not depend on $d$ being larger than one and similarly arises in the single-dimensional case (see \citet{lee2008randomized} for further discussion). The manipulation test should be viewed as a robustness check and in any specific application cannot substitute a discussion on the validity of the assumptions of the MRDD.

\begin{remark}
{\normalfont (Alternative Testable Implication)}
From the assumptions of the MRDD, it is possible to derive alternative testable implications on the density of the running variables. For example, they also imply the continuity of the joint density $f(Z)$ at the boundary. In the context where the treatment is assigned as described in Equation \eqref{eq:treatment_assignment}, a condition on the joint density is harder to interpret than the one on the marginals. Moreover, the fact that it concerns continuity at infinite points (the boundary) seems to be a practical disadvantages, as testing the condition of a multidimensional density in an infinite number of points may determine low power in many applications. In other contexts, however, like the geographic RDD, the condition on the joint distribution has a clear interpretation and can constitute the proper null hypothesis for a manipulation test. Proposing a test valid also for the geographic RDD goes beyond the scope of this paper, and how to test the continuity of a multivariate density remains an interesting open problem.
\end{remark}

\section{Manipulation Test} \label{sec:test}

In this section, I first introduce the manipulation test for the implication in Equation \eqref{eq:condition_marginal}, presenting the test statistic and the critical values. Practitioners interested in applying the manipulation test will find a practical description of how to implement the procedure. Then, in Section \ref{subsec:fhat}, I discuss the assumptions necessary to establish the asymptotic validity and consistency of the test. Finally, in Section \ref{subsec:manipulation_test}, I derive the formal results.

The MRDD manipulation test I propose for the implication in Equation \eqref{eq:condition_marginal} is defined as follows:
\begin{gather} \label{eq:manipulation_test_phi}
\phi(\hat t,\alpha) = \begin{cases}
1, & \text{if $\hat t>c(\alpha)$}\\
0, & \text{if $\hat t\leq c(\alpha)$}
\end{cases}
\end{gather}
where $\hat t$ is the test statistic, $\alpha$ the significance level, and $c(\alpha)$ the critical value. Whenever $\phi(\hat t, \alpha)$ equals 1, the null hypothesis is rejected.

The construction of the test statistic $\hat{t}$ involves two steps. First, for each running variable $j$, compute the statistic $\hat{\theta}_j$ along with its variance estimator $\hat{\sigma}^2_j$. The expression for $\hat{\theta}_j$ is given by:
\begin{gather}
    \hat{\theta}_j = \hat{f}^{+}_{Z_j | Z_{-j}}(0|z_{-j} \geq 0) - \hat{f}^{-}_{Z_j | Z_{-j}}(0|z_{-j} \geq 0)
\end{gather}
where $\hat{f}^{+}_{Z_j | Z_{-j}}$ and $\hat{f}^{-}_{Z_j | Z_{-j}}$ are estimators of the conditional marginal density of $Z_j$, as defined in Section \ref{subsec:fhat}. It is worth noting that $\hat{\theta}_j$ resembles the test statistic proposed by \cite{cattaneo2020simple} for testing the continuity of the density in a single-dimensional RDD. However, in this context, the test focuses on the continuity of a \textit{conditional} marginal density, which requires adaptations for the statistic and the formal proofs.

In practice, $\hat{\theta}_j$ and $\hat{\sigma}_j$ can be computed as follows: for any $j=1, \dots, d$, consider the subsample of data that includes only observations where $Z_{-j} \geq 0$. Within this subsample, use available packages in R and Stata to run the manipulation test proposed by \cite{cattaneo2020simple} and obtain the test statistic $\hat{\theta}_j$ along with its standard error estimator $\hat{\sigma}_j$.

Next, construct the test statistic $\hat{t}$ based on the vector $\hat{\theta} = (\hat{\theta}_1, \dots, \hat{\theta}_d)$ and the diagonal matrix $\hat{\Sigma} = \text{diag}(\hat{\sigma}^2_1, \dots, \hat{\sigma}^2_d)$, using the quadratic form:
\begin{gather} \label{eq:quadratic_form}
    \hat{t} = \hat{\theta}' \hat{\Sigma}^{-1} \hat{\theta}.
\end{gather}

In Section \ref{subsec:manipulation_test}, I derive the asymptotic distribution of $\hat{\theta}$ and prove that the test statistic $\hat{t}$ converges to a $\chi^2$ distribution with $d$ degrees of freedom. Consequently, the critical value $c(\alpha)$ can be chosen as the $1-\alpha$ quantile of a $\chi^2$ distribution with $d$ degrees of freedom.

With these test statistics and critical value, the manipulation test $\phi(\hat{t}, \alpha)$ is asymptotically valid and consistent for the null hypothesis in Equation \ref{eq:condition_marginal}.

\subsection{Assumptions} \label{subsec:fhat}

To prove the properties of the manipulation test, the following assumption is required.

\begin{ass}\label{ass:cdf}
{\normalfont (Smoothness)} $\{z_i\}_{i \in \{1, \dots, n\}}$ is an iid random sample of $Z$ with cumulative distribution function $F$. In neighborhoods of points on the boundary $\mathcal{B}$, $F$ is at least four times continuously differentiable.
\end{ass}

The assumption that $F$ has at least four continuous derivatives allows for consistent estimation of the conditional densities $f^{+}_{Z_j | Z_{-j}}(0)$ and $f^{-}_{Z_j | Z_{-j}}(0)$. This is analogous to the assumption required by the manipulation test of \cite{mccrary2008manipulation} for the single-dimensional case.

To simplify the notation, define $f_j(z_j) = f_{Z_j|Z_{-j}}(z_j|z_{-j}\geq0)$, and consider the local polynomial estimator $\hat{f}_{j,p}(z_j)$:
\begin{gather*}
\hat{f}_{j,p}(z_j) = e_1'\hat{\beta}(z_j) \\
\hat{\beta}(z_j) = \text{argmin}_{b\in \R^{p+1}} \sum_{i=1}^{n}\left[\tilde{F_j}(z_j|z_{-j}\geq 0) - r_p(z_{ji} -z_j)'b \right]^2 K\left(\frac{z_{ji} -z_j}{h_j}\right) \ind\{z_{-j}\geq 0\}
\end{gather*}
where $e_1'\in \R^{p+1}$ such that $e_1' = (0,1,0,\dots,0)$; $n_j = \sum_{i=1}^{n} \ind\{z_{-j}\geq 0\}$ is the number of observations actually considered for the test; $\tilde{F_j}(z_j|z_{-j}\geq 0) = \frac{1}{n_j} \sum_i \ind \{ z_{ji} \leq z_j\} \ind\{z_{-ji}\geq 0\}$ is the empirical distribution function for the marginal conditional distribution $F(z_j|z_{-j}\geq 0)$; $r_p(u) =(1,u,u^2,\dots,u^p)$ is a one-dimensional polynomial expansion; $h_j$ is a bandwidth, which will be better specify later; and $K(\cdot)$ is a kernel function satisfying the following assumption.

\begin{ass}\label{ass:kernel}
{\normalfont (Kernel)} The kernel function $K(.)$ is nonnegative, symmetric, continuous, and integrates to one: $\int K(u)du =1$. It has support $[-1,1]$.
\end{ass}


The local polynomial approach for estimating derivatives of the cumulative distribution function was introduced by \cite{jones1993simple} and \cite{fan2018local}. This method is particularly well-suited for the manipulation test as it achieves the optimal rate of convergence both at interior points and at the boundaries and is boundary adaptive. No adjustment is required when computing estimates for points near the boundary if the object of interest is the $\nu$-th derivative and $p - \nu$ is odd. For my manipulation test, where $\nu = 1$, I will use an estimator with $p = 2$ to take advantage of its boundary adaptiveness.

\subsection{Manipulation Test} \label{subsec:manipulation_test}

To establish asymptotic validity and consistency of the test $\phi(\hat{t},\alpha)$, it is necessary to derive some intermediate results. The first result regards the asymptotic properties of the density estimator $\hat{f}_{j,p}(z_j)$. Formulas for bias $B(x)$, variance $V(x)$, and consistent variance estimator $\hat{V}(x)$ are reported in Appendix \ref{app:formulas}.

\begin{prop} \label{thm:cons}
{\normalfont (Asymptotic distribution of $\hat{f}_{j,p}(z_j)$)}
Under Assumptions \ref{ass:cdf} and \ref{ass:kernel}, with $p=2$, $n h_j^2 \rightarrow \infty$ and $n h_j^{2p+1}=O(1)$, $\hat{f}_{j,p}(z_j)$ is a consistent estimator for $f_{j}(z_j)$. Furthermore,
\begin{gather}
\sqrt{n_j h_j} (\hat{f}_{j,p}(z_j) - f_j(z_j) - h_j^p B(z_j)) \rightarrow^d \mathcal{N}(0,V(z_j))
\end{gather}
where $B(z_j)$ is the asymptotic bias, and $V(z_j)$ the asymptotic variance.
\end{prop}

When $n h_j^{2p+1} = O(1)$, the bandwidth $h_j$ has the MSE-optimal rate and can be computed by cross-validation.

The presence of asymptotic bias $B(z_j)$ is standard in nonparametric settings and must be considered to ensure valid hypothesis testing. In this paper, I adopt the robust bias correction method proposed by \cite{calonico2018effect}. Alternative approaches include the critical values correction method suggested by \cite{armstrong2020simple}.

Bias-corrected inference for the density estimator $\hat{f}_{j,p}(z_j)$ can be obtained by considering the estimator $\hat{f}_{j,q}(z_j)$ with $q=p+1$, computed with the bandwidth $h_{j,p}$, the MSE-optimal bandwidth for $\hat{f}_{j,p}(z_j)$ (see \cite{calonico2022coverage} and \cite{cattaneo2019lpdensity} for an extensive discussion on the procedure). Moving forward, I will consider the estimator $\hat{f}_{j,p}(z_j)$ for point estimates, and the estimator $\hat{f}_{j,q}(z_j)$ with bandwidth $h_{j,p}$ to construct bias-corrected confidence intervals for $\hat{f}_{j,p}(z_j)$.

Let $n_{j+} = \sum_{i=1}^{n} \ind\{z_{j}\geq 0\}\ind\{z_{-j}\geq 0\}$ and $n_{j-} = \sum_{i=1}^{n} \ind\{z_{j}< 0\}\ind\{z_{-j}\geq 0\}$, and denote by $\frac{n_{j+}}{n_j} \hat{f}_{j+,p}(z_j)$ and $\frac{n_{j-}}{n_j} \hat{f}_{j-,p}(z_j)$ the estimators of conditional density $f_j(z_j) = f_{Z_j|Z_{-j}}(z_j|z_{-j}\geq0)$ computed considering only observations in $\{z: z \geq 0 \}$ and $\{z: z_j <0, z_{-j} \geq 0 \}$, respectively.

Consider $\theta_j = \lim_{z_j\rightarrow 0^+} f_j(z_j) - \lim_{z_j\rightarrow 0^-} f_j(z_j)$, and note that, when the implication in Equation \eqref{eq:condition_marginal} is true, $\theta_j = 0$ for all $j$. Define the statistic $\hat\theta_{j,p}$:
\begin{gather} \label{eq:thetaj}
\hat\theta_{j,p} = \frac{n_{j+}}{n_j} \hat{f}_{j+,p}(0) - \frac{n_{j-}}{n_j} \hat{f}_{j-,p}(0).
\end{gather}
The following result derives the asymptotic distribution of the statistic.

\begin{prop} \label{thm:snorm}
{\normalfont (Asymptotic distribution of $\hat{\theta}_{j,q}$)}
Under Assumptions \ref{ass:cdf} and \ref{ass:kernel} holding separately for $\{Z:Z\geq0\}$ and $\{Z:Z_{-j}\geq0,Z_j < 0\}$, with $p=2$, $q=p+1$, $n \min\{h_{j-},h_{j+}\} \rightarrow\infty$, and $n \max\{h_{j-}^{1+2q},h_{j+}^{1+2q}\} \rightarrow 0$, when the implication $\theta_{j}=0$ is true:
\begin{gather}
\frac{1}{\sigma_j} \hat \theta_{j,q} \rightarrow^d \mathcal{N}\left(0, 1 \right)
\end{gather}
where
\begin{gather}
\sigma_j^2 = \frac{\pi_{j+}}{h_{j+} \pi_j n} V_{j+}(0) + \frac{\pi_{j-}}{h_{j-} \pi_j n} V_{j-}(0).
\end{gather}
A consistent estimator $\hat \sigma_j^2$ for $\sigma_j^2$ can be obtained by 
\begin{gather}
\hat \sigma_j^2 =\frac{n_{j+}}{h_{j+} n_j^2} \hat V_{j+,q}(0) + \frac{n_{j-}}{h_{j-} n_j^2} \hat V_{j-,q}(0).
\end{gather}
\end{prop}

Proposition \ref{thm:snorm} is valid for any $j$. I am interested in the asymptotic distribution of the vector $\hat{\theta}=(\hat{\theta}_1,\hat{\theta}_2,\dots,\hat{\theta}_d)$, whose distribution under the null hypothesis of continuity of $f_{Z_j|Z_{-j}}(z_j|z_{-j}\geq0)$ is derived in the next theorem.

\begin{theorem} \label{thm:theta}
{\normalfont (Asymptotic distribution of $\hat{\theta}$)}
Under Assumptions \ref{ass:cdf} and \ref{ass:kernel} holding separately for $\{Z:Z\geq0\}$ and $\{Z:Z_{-j}\geq0,Z_j < 0\}$ for all $j$, with $p=2$, $q=p+1$, $n \min\{h_{j-},h_{j+}\} \rightarrow\infty$ and $n \max\{h_{j-}^{1+2q},h_{j+}^{1+2q}\} \rightarrow 0$ for all $j$, when $\theta=0$,
\begin{gather*}
\hat{\Sigma}^{-\frac{1}{2}}\hat \theta \rightarrow^d \mathcal{N}(0,I).
\end{gather*}
where $\hat{\Sigma}_{jj} = \hat{\sigma}^2_j$ as defined in Proposition \ref{thm:snorm}, and $\hat{\Sigma}_{ji}=0$ for all $i \neq j$.
\end{theorem}

Theorem \ref{thm:theta} shows that, even if the number of observations simultaneously considered by any pair of estimators $\theta_j$ and $\theta_i$ goes to infinite, they are asymptotically independent.

The asymptotic distribution of the quadratic form test statistic $\hat{t}$ defined in Equation \eqref{eq:quadratic_form} is immediately derived from Theorem \ref{thm:theta}. The quadratic form is, in fact, a continuous function of $\hat{\Sigma}^{-\frac{1}{2}}\hat \theta$, and is hence distributed as the sum of $d$ squared independent normals: a $\chi^2$ distribution with $d$ degrees of freedom.

Theorem \ref{thm:theta} allows to derive the asymptotic distribution of any continuous function of the vector $\hat{\Sigma}^{-\frac{1}{2}}\hat \theta$. To test the null hypothesis in Equation \eqref{eq:condition_marginal}, it is useful to consider the $\ell^p$-norm statistic $\lVert \hat{\Sigma}^{-\frac{1}{2}}\hat \theta \rVert_p$: large values of this statistic gives evidence against the null hypothesis, suggesting to choose the critical value $c(\alpha)$ as the $1-\alpha$ quantile of its asymptotic distribution. The continuity of the $\ell^p$-norm ensures that $\lVert \hat{\Sigma}^{-\frac{1}{2}} \hat{\theta} \rVert_p \rightarrow^d \lVert X \rVert_p$, where $X \in \mathbb{R}^d$ is a random vector with distribution $\mathcal{N}(0, I)$. In general, the quantiles of $\lVert X \rVert_p$ and hence the critical value for the test can be obtained through simulations, drawing random samples from $X$ and computing $\lVert X \rVert_p$. The quadratic form can be seen as a special case, where the Euclidean distance ($\ell^2$-norm) is squared to obtain a $\chi^2$ distribution whose quantiles can be analytically computed.

With these test statistic and critical value, the MRDD manipulation test $\phi(\hat t,\alpha)$ defined in Equation \eqref{eq:manipulation_test_phi} is asymptotically valid and consistent: when the null hypothesis in Equation \eqref{eq:condition_marginal} is true, the test has an asymptotic rejection probability of $\alpha$. When the null hypothesis is false, the test asymptotically rejects with probability one. The following corollary formalizes this result.

\begin{cor} \label{thm:coroll}
{\normalfont (Manipulation test)}
Let $H_0$ be the null hypothesis in Equation \eqref{eq:condition_marginal}, and consider the manipulation test $\phi(\hat t,\alpha)$ defined in Equation \eqref{eq:manipulation_test_phi} with the test statistic $\hat{t}$ defined in Equation \eqref{eq:quadratic_form} and as critical value $c(\alpha)$ the $1-\alpha$ quantile of the $\chi^2$ distribution. Under the assumptions of Theorem \ref{thm:theta}, when $H_0$ is true:
\begin{gather}
\lim_{n\rightarrow \infty} P(\phi(\hat t,\alpha)=1) = \alpha.
\end{gather}
When $H_0$ is false:
\begin{gather}
\lim_{n\rightarrow \infty} P(\phi(\hat t,\alpha)=1) = 1.
\end{gather}
\end{cor}

In the next section, the manipulation test is compared to some alternative used in the literature, studying its finite sample properties in terms of power and size control.

\begin{remark}
{\normalfont (Choice of the Test Statistic)}
The manipulation test is asymptotically valid and consistent for any $\ell^p$-norm test statistic, with $p > 0$. Different $\ell^p$-norms imply differences in power against different alternatives. The quadratic form test statistic with $p = 2$ performs well in detecting discontinuities diffused across all running variables. These behaviors are of primary interest: when the treatment's benefits lead agents to manipulate their running variables for eligibility, manipulation is likely widespread. In contrast, if only one running variable is manipulated, other test statistics may be more appropriate. For this case, I discuss the $\ell^\infty$-norm statistic, equivalent to the max-statistic $\hat{t}_m = \max\left( \left| \frac{\hat{\theta}_1}{\hat{\sigma}_1} \right|, \dots, \left| \frac{\hat{\theta}_d}{\hat{\sigma}_d} \right| \right)$, in Appendix \ref{app:maxstat}.
\end{remark}





\section{Alternative approaches} \label{sec:comparison}
It is common for applied papers utilizing the regression discontinuity design to include manipulation tests for their running variables, as highlighted by the survey by \cite{canay2018approximate}. Although there is a lack of a theoretical foundation for the test in the multidimensional case, Table \ref{tab:survey} attests the prevalence of manipulation tests in papers employing the MRDD. These papers typically employ two different approaches: computing multiple tests, one for each running variable separately (Separate Tests, ST); or aggregating the running variables considering the distance of each observation from the boundary, and then running the manipulation test for the distance as the single running variable (Distance as running variable Test, DT). The ST approach does not control the size for the null hypothesis in Equation \eqref{eq:condition_marginal}, while the DT is not consistent against certain alternatives, and is not robust to change in the units of measure. In the following sections, I compare these approaches with the proposed manipulation test (MT). I highlight their limitations and show how they can be adapted to properly test the null hypothesis in Equation \eqref{eq:condition_marginal}.

\begin{table}[]
\centering
\caption{\small Published papers using MRDD. Most studies utilize either separate tests (ST), where each running variable's density continuity is tested individually, or distance as running variable tests (DT), which consider the distance of observations from the boundary as the unique running variable.}
\label{tab:survey}
\begin{tabular}{lccc}
\hline
\multicolumn{1}{c}{Authors (Year)} & \multicolumn{1}{c}{Manipulation Test} &  \multicolumn{1}{c}{ST} & \multicolumn{1}{c}{DT} \\ \hline
 \cite{frey2019cash} & $\times$ &  &  \\
 \cite{matsudaira2008mandatory} & \checkmark & \checkmark &  \\
 \cite{hinnerich2014democracy} & \checkmark & \checkmark &  \\
 \cite{elacqua2016short} & \checkmark & \checkmark &  \\
 \cite{egger2015impact} & \checkmark & \checkmark &  \\
 \cite{evans2017smart} & \checkmark & \checkmark &  \\
 \cite{smith2017giving} & \checkmark & \checkmark &  \\
 \cite{londono2020upstream} & \checkmark & \checkmark &  \\
 \cite{clark2014signaling} & \checkmark &  & \checkmark \\
 \cite{cohodes2014merit} & \checkmark &  & \checkmark \\
 \cite{becht2016does} & \checkmark &  & \checkmark \\ \hline
\end{tabular}
\end{table}

\subsection{Separate Tests (ST)}
The separate tests procedure in the context of MRDD treats each running variable separately and applies existing manipulation tests designed for single-dimensional RDD \citep{mccrary2008manipulation,cattaneo2020simple,bugni2021testing}. In some cases \citep{egger2015impact, evans2017smart, londono2020upstream}, the test is conducted on the conditional marginal densities by considering only the units that meet the threshold for the other running variables. Either way, without accounting for multiple hypotheses testing, the ST is invalid, as it does not control the size for the null in Equation \eqref{eq:condition_marginal}.

\subsection{Multiple Hypotheses Test with Bonferroni Correction (BCT)} \label{sec:bct}
A straightforward fix for the ST is to account for multiple hypotheses testing using Bonferroni correction (BCT). In case the test by \cite{cattaneo2020simple} is employed, the resulting procedure partly overlaps with the test I proposed. To test the implication in Equation \eqref{eq:condition_marginal} at level $\alpha$, statistics $\hat{\theta}_j$ defined in Equation \eqref{eq:thetaj} are used to conduct separate tests for each running variable, with the critical values adjusted for multiple testing (for a review on multiple hypotheses testing, see chapter 9 in \cite{lehmann2022testing}). The null hypothesis of running variable $j$ continuity is tested at significance level $\frac{\alpha}{d}$, where $d$ is the number of running variables. The implication in Equation \eqref{eq:condition_marginal} is rejected if the continuity of any of the running variables is rejected. The correction for the number of hypotheses ensures correct coverage, meaning that the asymptotic family-wise error rate, which is the probability of rejecting one or more true null hypotheses (and hence the probability of rejecting the implication in Equation \eqref{eq:condition_marginal} when it is true), is not greater than $\alpha$. In this context, alternative multiple hypotheses corrections (e.g., stepwise methods or Holm correction) would coincide with the BCT, as rejecting continuity for the density of just one running variable is equivalent to rejecting the implication in Equation \eqref{eq:condition_marginal}.

MT and BCT rely on the same vector of statistics $\hat{\theta}$. Local power analysis can be used to compare the power of the two tests against local alternative hypotheses, letting the discontinuity of the density at the threshold get smaller as the sample size increases. I consider a framework where all the running variables are discontinuous, and the discontinuity is equal to $k/\sqrt{nh}$, such that, asymptotically, $\hat{\theta}_j \rightarrow^d \mathcal{N}(k,1)$ for all $j$. This framework mimics a setting where all the running variables are manipulated to get the treatment: if treatment is desirable (or undesirable), I expect all the agents close to the treatment region to manipulate their running variables to get in (or out) the region.

\begin{figure}
\begin{center}
\resizebox{\textwidth}{!}{
\includegraphics{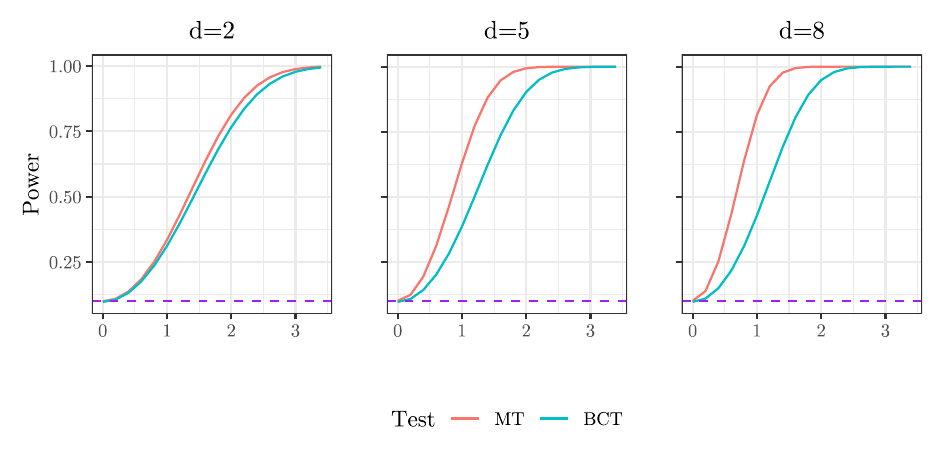}
}
\caption{\small Local asymptotic power curves for the manipulation test proposed in this paper (MT, in red) and the multiple hypotheses test with Bonferroni correction (BCT, in blue). Plots consider different different numbers $d$ of running variables. The significance level $\alpha=0.1$ is represented by the dotted horizontal purple line. All the running variables are discontinuous, and the discontinuity is equal to $k/\sqrt{nh}$, such that $\hat{\theta}_j \rightarrow^d \mathcal{N}(k,1)$ for all $j$.}
\label{fig:local_asymp}
\end{center}
\end{figure}

Figure \ref{fig:local_asymp} reports power curves for MT (in red) and BCT (in blue), considering different numbers of running variables $d \in \{ 2,5,8\}$. When all running variables exhibit discontinuity, MT outperforms BCT in terms of power. This is because MT combines information from all the running variables and effectively detects manipulation when it is widespread across them. On the other hand, BCT considers each running variable separately, which results in lower power in case of widespread manipulation.

The local power analysis confirms that aggregating information and testing a single hypothesis is better than testing multiple hypotheses for the continuity of each running variable separately. The MT is less conservative, at least against alternative hypotheses where manipulation is spread across all running variables.


\subsection{Distance as Running Variable Test (DT)} \label{ssec:srt}
The second approach for manipulation tests in the multidimensional setting employed in applied research involves dimension reduction: the multidimensional regression discontinuity design is reduced to a single-dimensional design, with the scalar distance between the vector of running variables and the boundary $\mathcal{B}$ as the only running variable. The distance is used to estimate the conditional average treatment effect, similar to the classical RDD, and to conduct a manipulation test using one of the available tests \citep{mccrary2008manipulation,cattaneo2020simple,bugni2021testing}. This approach appears simple since it directly relates to the single-dimensional RDD case. Nonetheless, it comes with some caveats that need to be considered.

First, the choice of distance metric and measurement units can significantly impact the test results. Different distance metrics used to measure the distance between the running variables and the boundary lead to different test statistics and test outcomes, as well as different units of measurement for the running variables. To address this second issue, one possible solution is to standardize the running variables to have unit variance before conducting the manipulation test, but this practice may in fact worsen the properties of the test (as showed in Monte Carlo simulations).

A second flaw of the DT is that it is inconsistent against certain fixed alternatives of the null hypothesis in Equation \eqref{eq:test}. For instance, if there are opposite discontinuities in the marginal distributions of different running variables, and these discontinuities balance each other out, the asymptotic probability that the DT will reject the false null hypothesis in Equation \eqref{eq:test} is equal to $\alpha$, rather than one. A design where the DT test is inconsistent is studied in Section \ref{sec:mod3}.

Despite the lack of a clear theoretical background and the lack of power against specific alternative hypotheses, it may still be the case that the DT performs better than MT and BCT in contexts plausible in applications. However, if any, the evidence suggests poorer finite sample performance for DT, as shown by \cite{wong2013analyzing} or in the Monte Carlo simulation in the following section.

\section{Monte Carlo simulation} \label{sec:mc}
In this section, I conduct Monte Carlo simulations to investigate the finite sample performance of the manipulation test (MT) proposed in this paper. It is compared with the multiple hypotheses test with Bonferroni correction (BCT) and the single running variable test, both with standardization to unitary variance (SDT) and without it (DT).

Without loss of generality, the cutoff is set at 0 for all running variables: units are treated when all the running variables are nonnegative. The boundary is the set of points with all nonnegative coordinates and at least one coordinate equal to zero. Figure \ref{fig:designs} reports a realization of the simulated samples for the four models, illustrating the joint distribution of the running variables.

Models \ref{mod:unif} and \ref{mod:norm} show how the tests are comparable in controlling the size, while Models \ref{mod:divdisc} and \ref{mod:disc} attest the better power properties for MT discussed in Section \ref{sec:comparison}.

\begin{figure}
\begin{center}
\resizebox{\textwidth}{!}{
\includegraphics{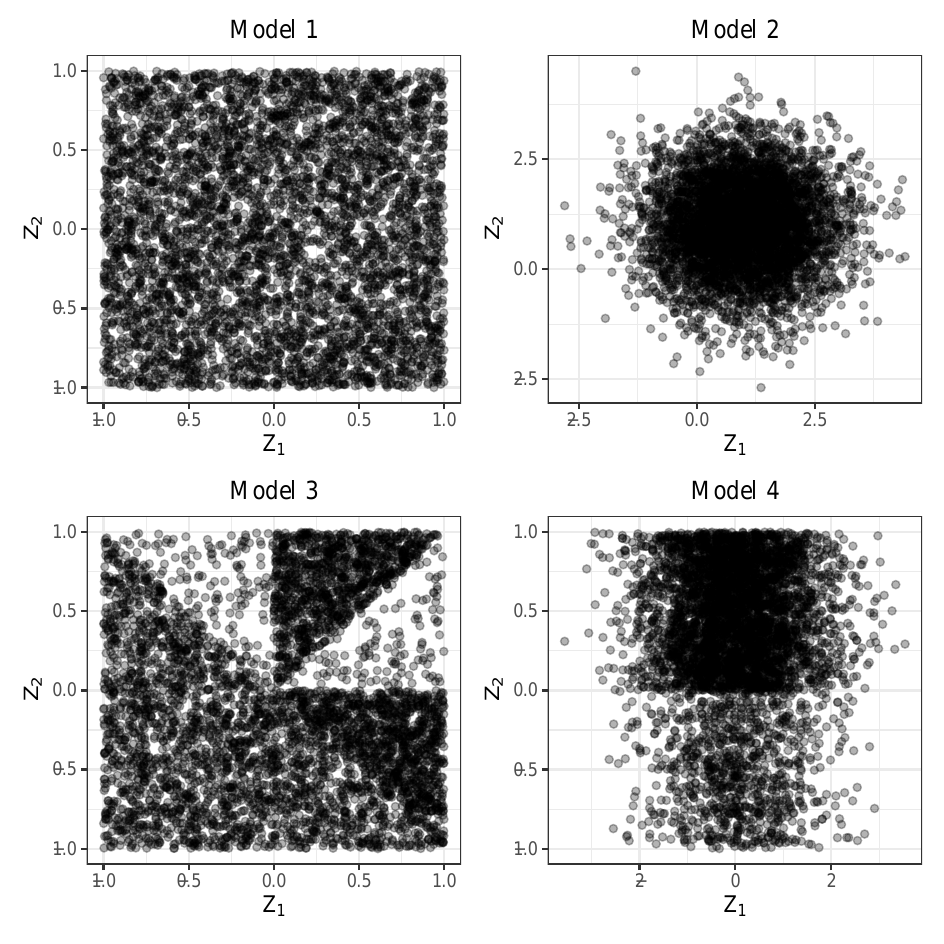}
}
\caption{\small Scatterplots of a sample of size $n=5,000$ from the four models illustrate the joint distribution of $Z_1$ and $Z_2$. For Models \ref{mod:unif} and \ref{mod:norm}, parameter $d$ is set to 2 (two running variables). Joint density is continuous, and the condition in Equation \eqref{eq:condition_marginal} is satisfied. For Models \ref{mod:divdisc} and \ref{mod:disc}, parameters $\gamma_1$ and $\gamma_2$ are 0.8: joint density is not continuous, and condition in Equation \eqref{eq:condition_marginal} is not satisfied.}
\label{fig:designs}
\end{center}
\end{figure}

\subsection{Model 1 and Model 2}
\begin{modl} \label{mod:unif}
Consider $d$ running variables uniformly distributed:
\begin{gather}
    Z_j \sim U(-1,1) \quad \text{for } j \text{ in } \{1,\dots,d \}.
\end{gather}
\end{modl}
In Model \ref{mod:unif}, densities are symmetrical to the threshold, and the density function is flat. Since the setting can be particularly convenient for the tests, Model \ref{mod:norm} considers densities with different behaviors at the two sides of the boundary.

\begin{modl} \label{mod:norm}
Consider $d$ running variables normally distributed and centered at 1:
\begin{gather}
    Z_j \sim \mathcal{N}(1,1) \quad \text{for } j \text{ in } \{1,\dots,d \}.
\end{gather}
\end{modl}

Both Models \ref{mod:unif} and \ref{mod:norm} are simulated considering sample sizes $n \in \{500,2000,5000 \}$ and total numbers of running variables $d \in \{2,3,4 \}$. The simulation results are presented in Table \ref{tab:mod12}. Overall, for all the sample sizes considered, the tests tend to under-reject, with empirical rejection rates closer to the theoretical ones for DT and SDT. MT and BCT exhibit similar performances across different models and parameters specification: rejection rates get closer to asymptotic ones as the effective sample size grows, when $d$ decreases for a fixed $n$ or $n$ increases for a fixed $d$. For the same values of parameters $d$ and $n$, under-rejection is larger in Model \ref{mod:unif} than Model \ref{mod:norm}: as expected, the steeper is the probability density function at the cutoff, the higher is the probability for the test to reject the true null.

\begin{table}[]
\centering
\caption{\small Rejection rates under the true null hypothesis of continuity of marginal densities of the running variables, computed through 5,000 Monte Carlo simulations, at 5\% significance level. MT is the manipulation test proposed in this paper; BCT is the multiple hypotheses test with Bonferroni correction; DT and SDT consider as single running variables the Euclidean distance from the boundary: for SDT, running variables are standardized to have unitary variance before computing the distance, while for DT they are not. $d$ is the number of running variables, and $n$ is the sample size.}
\label{tab:mod12}
\begin{tabular}{ll @{\hspace{3em}} cccc @{\hspace{2.5em}} cccc}
\hline \hline
 &  & \multicolumn{4}{c}{Model 1} & \multicolumn{4}{c}{Model 2} \\ \hline
$d$ & $n$ & MT & BCT & DT & SDT & MT & BCT & DT & SDT \\ 
\hline 
\multirow{3}{*}{2} & 500 & $0.025$ & $0.030$ & $0.043$ & $0.044$ & $0.037$ & $0.036$ & $0.043$ & $0.043$ \\ 
& 2,000 & $0.036$ & $0.039$ & $0.047$ & $0.046$ & $0.042$ & $0.041$ & $0.050$ & $0.049$ \\ 
& 5,000 & $0.032$ & $0.031$ & $0.045$ & $0.045$ & $0.040$ & $0.039$ & $0.052$ & $0.052$ \\
\hline
\multirow{3}{*}{3} & 500 & $0.024$ & $0.025$ & $0.036$ & $0.037$ & $0.029$ & $0.033$ & $0.041$ & $0.040$ \\ 
& 2,000 & $0.027$ & $0.029$ & $0.041$ & $0.040$ & $0.038$ & $0.039$ & $0.039$ & $0.039$ \\ 
& 5,000 & $0.033$ & $0.033$ & $0.037$ & $0.037$ & $0.037$ & $0.040$ & $0.045$ & $0.044$ \\ 
\hline
\multirow{3}{*}{4} & 500 & $0.025$ & $0.019$ & $0.047$ & $0.047$ & $0.037$ & $0.032$ & $0.043$ & $0.039$ \\ 
& 2,000 & $0.020$ & $0.019$ & $0.041$ & $0.040$ & $0.036$ & $0.036$ & $0.042$ & $0.042$ \\ 
& 5,000 & $0.030$ & $0.035$ & $0.044$ & $0.044$ & $0.039$ & $0.041$ & $0.041$ & $0.041$ \\ 
 \hline \hline 
\end{tabular}
\end{table}

Unsurprisingly, all the tests have a comparable performance: no theoretical reason suggests discrepancies for the three tests in controlling size. Differences arise when finite sample power is studied, as shown by Models \ref{mod:divdisc} and \ref{mod:disc}.

\subsection{Model 3} \label{sec:mod3}

\begin{modl} \label{mod:divdisc}
Define random vector $Z^* =(Z_1^*,Z_2^*)$, where $Z_1^* \sim U(-1,1)$, $Z_2^* \sim U(-1,1)$, and $Z_1^*$ and $Z_2^*$ independent. Define sets $A_1= \{(z_1,z_2): z_1<0, -z_1<z_2 \}$ and $A_2= \{(z_1,z_2): z_1>z_2, z_2>0 \}$.

Consider two running variables $Z_1$ and $Z_2$ distributed as follows:
\begin{gather}
    Z_1 \sim \begin{cases}
Z_1^*, &  \text{if } Z^* \not\in A_1\\
Z_1^*, & \text{with probability $1-\gamma_1$} \text{ if } Z^* \in A_1\\
-Z_1^*, & \text{with probability $\gamma_1$} \text{ if } Z^* \in A_1
\end{cases} \\
Z_2 \sim \begin{cases}
Z_2^*, &  \text{if } Z^* \not\in A_2\\
Z_2^*, & \text{with probability $1-\gamma_1$} \text{ if } Z^* \in A_2\\
-Z_2^*, & \text{with probability $\gamma_1$} \text{ if } Z^* \in A_2
\end{cases}
\end{gather}
\end{modl}

Model \ref{mod:divdisc} mimics a setting where the two running variables are manipulated, but in opposite directions: when $Z^* \in A_1$, $Z_1$ is manipulated to get the treatment; when $Z^* \in A_2$, $Z_2$ is manipulated to avoid the treatment.
Parameter $\gamma_1$ governs the extent of manipulation: when $\gamma_1=0$, the joint density of $Z_1$ and $Z_2$ is continuous; when $\gamma_1=1$, the joint density becomes zero in regions $A_1$ and $A_2$, resulting in the maximum discontinuity.

The curves depicted in Figure \ref{fig:pcurves} illustrate the finite sample performance of the tests. For both MT and BCT, the power of the tests increases with the degree of manipulation $\gamma_1$, as expected. For DT and SDT, the power always remains equal to the test size. This design corresponds to the situation described in Section \ref{ssec:srt}: the condition in Equation \eqref{eq:condition_marginal} is not satisfied, since neither the marginal densities of $Z_1$ nor $Z_2$ are continuous at the threshold (as shown in Figure \ref{fig:designs}). Nonetheless, the probability density function of the distance from the boundary is continuous. Consequently, the null hypothesis tested by DT and SDT is true, resulting in trivial power for these tests.

\begin{figure}
\begin{center}
\resizebox{\textwidth}{!}{
\includegraphics{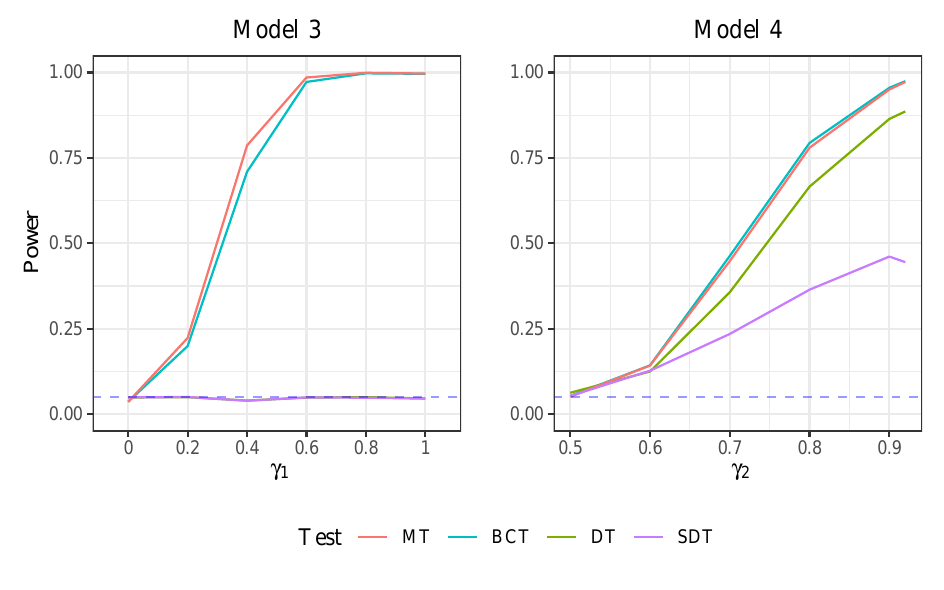}
}
\caption{\small Power of different manipulation tests with $n=2,000$, computed though 5,000 Monte Carlo simulations. The dotted line indicates the nominal size of the tests (5\%). MT is the manipulation test proposed in this paper; BCT is the multiple hypotheses test with Bonferroni correction; DT and SDT consider as single running variables the Euclidean distance from the boundary: for SDT, running variables are standardized to have unitary variance before computing the distance; for DT, they are not. Parameters $\gamma_1$ and $\gamma_2$ determine the degree of manipulation. In Model 3, lines for DT and SDT overlap.}
\label{fig:pcurves}
\end{center}
\end{figure}

\subsection{Model 4}

\begin{modl} \label{mod:disc}
Consider two running variables distributed as follows:
\begin{gather}
    Z_1 \sim \mathcal{N}(0,1) \\
    Z_2 \sim \begin{cases}
U(0,1), & \text{with probability $\gamma_2$}\\
U(-1,0), & \text{with probability $1-\gamma_2$}.
\end{cases}
\end{gather}
\end{modl}
Model \ref{mod:disc} is a design where only $Z_2$ is manipulated. The manipulation is determined by the parameter $\gamma_2$. When $\gamma_2=0.5$, $Z_2$ has a continuous density, following a uniform distribution between -1 and 1. When $\gamma_2=1$, the density of $Z_2$ becomes zero at the left of the boundary. The degree of discontinuity increases as the value of $Z_2$ deviates further from 0.5.

The curves in Figure \ref{fig:pcurves} show how the finite sample power depends on $\gamma_2$. The power increases with higher values of $\gamma_2$ for all tests, but it is lower for DT and SDT. Additionally, despite two similar versions of the same test, DT and SDT's performances are different. As discussed in Section \ref{ssec:srt}, the choice of the unit of measure affects the result of DT.
In this case, the standardization applied in SDT reduces its power compared to DT. Standardization is not the solution to the issue.

MT and BCT exhibit similar behavior in the context of Model \ref{mod:disc}. It mimics a framework different from the one studied in the local asymptotic analysis, and there are no theoretical reason to expect the MT to perform better in this setting.

Overall, the Monte Carlo simulations confirm that the manipulation test proposed in this paper has better finite sample properties than alternative tests. The simulations demonstrate advantages in terms of power and robustness, reinforcing the findings derived from the local asymptotic analysis discussed earlier.

As outlined in Section \ref{sec:test}, the proposed manipulation test can be readily implemented using existing packages in popular statistical software, such as R and Stata, with just a few lines of code. This ease of implementation enhances the practical applicability of the test. The next section implements the manipulation test in a real-world application, illustrating its simplicity.

\section{Application: \cite{frey2019cash}} \label{sec:app}
I apply my manipulation test for the MRDD considered by \cite{frey2019cash} investigating the political economy of redistributive policies. In the original analysis, no manipulation test is reported. The paper studies the impact of cash transfers implemented by the Brazilian federal government on the dynamics of clientelism at the municipal level. The main hypothesis suggests that these cash transfers, by reducing the vulnerability of the poor, diminish the attractiveness of clientelism as a strategy for incumbent mayors.

The Bolsa Família (BF) program is the largest conditional cash transfer program globally and has been implemented in households across Brazil since 2003. The coverage of BF across different municipalities exhibits a positive correlation with the funding allocated to the Family Health Program (FHP), a household-based healthcare program run by municipalities since 1995. The positive correlation between BF coverage and FHP funding can be attributed to the fact that FHP teams have a significant penetration among the poor households, potential beneficiaries of BF. This enables them to effectively disseminate information about the BF program and encourage enrollment among eligible households.

To estimate the causal effect of the cash transfers on local clientelism, \cite{frey2019cash} exploits the link between BF and FHP, along with a specific discontinuity in the design of the FHP. The FHP provides municipalities with an additional 50\% funding if they meet two criteria: a population of fewer than 30,000 inhabitants and a Human Development Index (HDI) below 0.70. This discontinuity, determined by the joint thresholds of population and HDI, is directly reflected in the diffusion of the BF program: consequences of cash transfers can be analyzed using an MRDD.

Figure \ref{fig:frey2019} provides a visualization of the MRDD. In the space of the two running variables (population and HDI), treated municipalities are depicted in light blue, and untreated municipalities in dark blue. The red line represents the boundary. In this specific context, the treatment corresponds to the additional FHP funding, which leads to variations in the adoption rate of the BF program.

\begin{figure}
\begin{center}
\resizebox{\textwidth}{!}{
\includegraphics{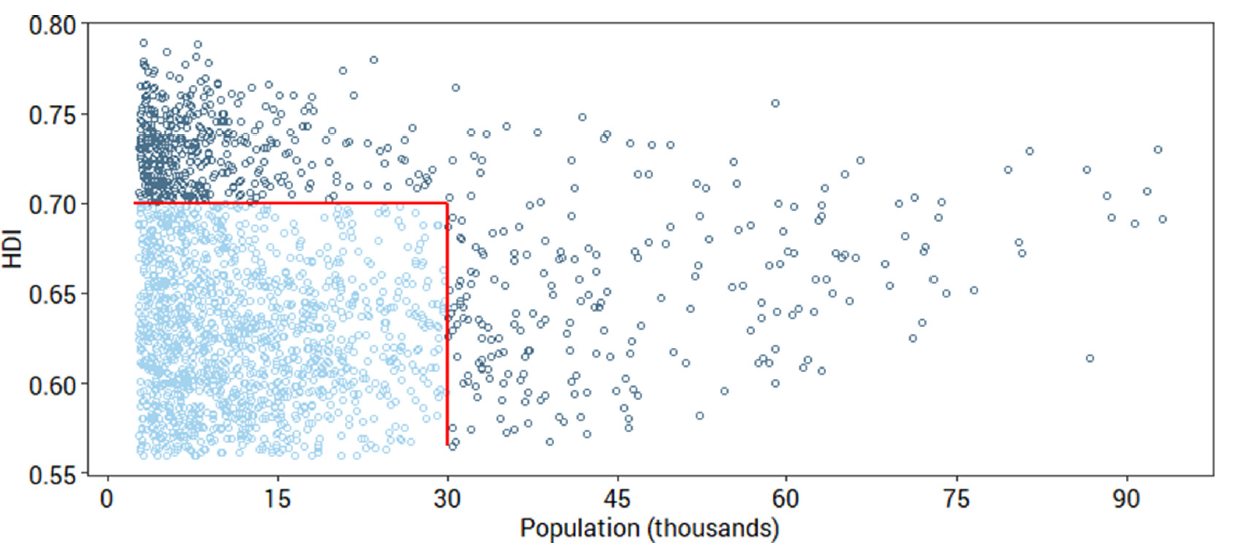}
}
\caption{\small Running variables for the MRDD considered by \cite{frey2019cash}. Municipalities are assigned to the policy (light blue dots) when the population is below 30,000 inhabitants (x-axis), and Human Development Index is below 0.7 (y-axis).}
\label{fig:frey2019}
\end{center}
\end{figure}

In this context, the MRDD requires the assumption that the joint density of population and HDI is continuous at the boundary, thereby ensuring the continuity of their respective marginal densities. The manipulation test is employed to validate the design and enhance the credibility of the study's findings. The test is conducted considering two running variables, resulting in a p-value of 0.490, as reported in Table \ref{tab:pvalues}. With a significance level of $\alpha = 0.05$, the null hypothesis is not rejected, indicating no evidence of manipulation. In this context, the same conclusion of absence of manipulation is also reached by the other tests discussed in Section \ref{sec:comparison} (BCT, DT, and SDT). This is not surprising, as all the tests are expected to control the size similarly and mostly differ in power. It is nonetheless interesting to observe how, even in real applications, the DT and SDT yield different p-values, despite using the same procedure just with a rescale of the running variables.

\begin{table}[]
\centering
\caption{\small P-values for various manipulation tests for the MRDD considered by \cite{frey2019cash}.}
\label{tab:pvalues}
\begin{tabular}{l @{\hspace{3em}} cccc }
\hline \hline
 Test & MT & BCT & DT & SDT \\ 
\hline 
 p-value & $0.49$ & $0.35$, $0.45$ & $0.19$ & $0.38$  \\ 
 \hline \hline 
\end{tabular}
\end{table}

It is important to emphasize that the manipulation test alone does not establish the model's validity. It serves as a robustness check and provides supporting evidence for the continuity of the densities of the running variables, which is a necessary condition for Assumption \ref{ass:ycont}. It cannot substitute a discussion on why the assumptions of the MRDD are likely to hold in this setting, a discussion that remains essential for drawing valid conclusions from the analysis.

\section{Conclusion} \label{sec:fut}
This paper introduced a manipulation test for the multidimensional regression discontinuity design. Like the single-dimensional RDD developed by \cite{lee2008randomized}, I constructed a causal inference model for the MRDD. From an assumption on unobservable agents' behavior, I derived an implication on observable quantities—the continuity of the conditional marginal densities of the multiple running variables. I proposed a manipulation test for the implication, which was then compared with alternative approaches commonly used in applied research. While these approaches vary, they generally lack clear theoretical justification, and some are inconsistent for the considered implication. Through Monte Carlo simulations and local power analysis, I explored the finite sample properties of the tests.

The manipulation test should be seen as a robustness check to strengthen the credibility of the assumptions required by the MRDD, and it is not intended as a pre-test. It can be readily implemented with already existing packages without the need for additional tuning parameters. Considering the application in \cite{frey2019cash}, I showed how to use the test in practice.

\bibliographystyle{chicago}
\bibliography{references}

\newpage

\appendix
\counterwithin{figure}{section}

\section{Model for the MRDD}\label{app:model}

In this Appendix, I extend the model proposed by \cite{lee2008randomized} for the Regression Discontinuity Design to the multidimensional setting, allowing the number of running variables $d$ to be larger than one. Some of the results are not new (identification results already appeared in \cite{imbens2009regression} and \cite{keele2015geographic}), but it is useful to report the entire extension to show where the implication tested by the manipulation test comes from.

Let $W$ be a random variable with support $\mathcal{W}$ and density $g(w)$. $W$ indicates the type of agents, unobservable to the researcher, and can be discrete or continuous, with finite or infinite support.

Let $Z\in \R^d$ be a random vector of $d$ observable continuous running variables with joint cumulative distribution function $F(z)$ and marginal CDFs $F_j(z_j)$.

The treatment status $D$ depends on $Z$: I consider a sharp design where the treatment status depends deterministically on $Z$. The model can also be extended to a fuzzy design, where the probability of being treated changes discontinuously at the threshold. Units are treated ($D=1$) when all components of $Z$ are above a certain threshold:
\begin{gather}
D = D(Z) = \ind \{Z \geq c \} = \ind \{Z_1 \geq c_1 \}\ind \{Z_2 \geq c_2 \}\dots\ind \{Z_d \geq c_d \}
\end{gather}
and $D=0$ otherwise. Without loss of generality, consider a rescaling of $Z$ such that $c_j=0$, for all $j$: units are treated when, for all $j$, $Z_j\geq0$.

Let $\mathcal{T}$ be the set of values of $Z$ for which $D(Z)=1$, and indicate with $\bar{\mathcal{T}^C}$ the closure of the complement of $\mathcal{T}$. Define the boundary $\mathcal{B} = \mathcal{T} \cap \bar{ \mathcal{T}^C}$, and note that, by definition, in a neighborhood of any point $b\in \mathcal{B}$ there are both treated and untreated units.

Let $F(z|w):\R^d \times \mathcal{W} \rightarrow [0,1]$ be the cdf of $Z$ conditional on $W$.

\begin{ass}\label{ass:cont}
{\normalfont (Continuity of density)}
For all $w \in \mathcal{W}$, for all $b \in \mathcal{B}$, $F(z|w)$ is continuously differentiable in $z$ at $b$. Furthermore, the conditional density $f(z|w)$ is bounded away from zero at $b$.
\end{ass}
Assumption \ref{ass:cont} is asking the conditional density $f(z|w)$ of the running variables $Z$ to be continuous and non-zero at $b\in \mathcal{B}$, for all $w$. The testable implication for the manipulation test is derived from this assumption. It does not entirely exclude influence over the score vector $Z$: some manipulation is allowed, as long as it is not deterministic, as discussed in Section \ref{sec:example}.

The researcher observes an iid sample from the joint distribution of $\{Z,D,Y\}$, where the observed outcome $Y$ is defined as $Y=(1-D)Y_0 + DY_1$. Potential outcomes $Y_0 = y_0(W,Z)$ and $Y_1 = y_1(W,Z)$ are random functions of $W$ and $Z$. For each value of agent type $W$ and running variable $Z$, $y_0(w,z)$ and $y_1(w,z)$ are random variables, and their distributions satisfy the following assumption.

\begin{ass} \label{ass:ycont}
{\normalfont (Continuity of expectation of potential outcomes)} For all $w \in \mathcal{W}$, for all $b \in \mathcal{B}$, expected potential outcomes $\E[y_0(w,z)]$ and $\E[y_1(w,z)]$ are continuous in $z$ at $b$.
\end{ass}

Intuitively, the MRDD compares treated and untreated units close to the boundary. Assumption \ref{ass:ycont} guarantees that the local information from the observable quantities on one side of the boundary is informative about the unobservable quantities on the other side.

The parameters of interest are $\tau(b)$ and $\tau$. $\tau(b)$ is the Conditional Average Treatment Effect (CATE):
\begin{gather*}
\tau(b) = \E[Y_1 -Y_0 |Z=b], \quad b \in \mathcal{B}
\end{gather*}
and $\tau$ is the Integrated Conditional Average Treatment Effect (ICATE):
\begin{gather*}
\tau = \E[Y_1 -Y_0 |Z \in \mathcal{B}].
\end{gather*}
The CATE is a function that maps every point on the boundary to an average treatment effect, while the ICATE aggregates these effects into a single parameter.

Theorem \ref{thm:ident} is the main result for identification in the multidimensional setting.

\begin{theorem} \label{thm:ident}
{\normalfont (Identification of $\tau(b)$ and $\tau$)}
Under Assumptions \ref{ass:cont} and \ref{ass:ycont}, $\tau(b)$ and $\tau$ are identified.
\end{theorem}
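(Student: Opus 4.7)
The plan is to reduce the identification of $\tau(b)$ to the one-dimensional RDD argument at each boundary point $b \in \mathcal{B}$, and then to express $\tau$ as an integral of $\tau(b)$ against a density on $\mathcal{B}$ that is itself identified from the marginal distribution of $Z$. The key step for the pointwise parameter is to show that $z \mapsto \E[Y_d \mid Z = z]$ is continuous at every $b \in \mathcal{B}$ for $d \in \{0,1\}$; once this is in hand, the one-sided limits of the observable map $z \mapsto \E[Y \mid Z = z]$ as $z \to b$ from the interior of $\mathcal{T}$ and from the interior of $\mathcal{T}^C$ identify $\E[Y_1 \mid Z = b]$ and $\E[Y_0 \mid Z = b]$ respectively, because $D = \ind\{Z \geq 0\}$ forces $Y = Y_1$ on $\mathcal{T}^{\circ}$ and $Y = Y_0$ on $(\mathcal{T}^C)^{\circ}$.

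To prove continuity of $\E[Y_d \mid Z = z]$ at $b$, I would write, by iterated expectations, $\E[Y_d \mid Z = z] = \int_{\mathcal{W}} \E[y_d(w,z)]\, f(w \mid z)\, dw$, and then argue continuity of the integrand jointly in $(w,z)$ near $b$. Assumption \ref{ass:cont} makes $f(z \mid w)$ continuous in $z$ at $b$ and bounded away from zero, so that $f(z) = \int f(z \mid w) g(w)\, dw$ is continuous and positive at $b$ and hence $f(w \mid z) = f(z \mid w)g(w)/f(z)$ is continuous in $z$ at $b$ for every $w$. Assumption \ref{ass:ycont} gives continuity of $z \mapsto \E[y_d(w,z)]$ at $b$. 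A dominated-convergence step, using that $f(w \mid z)$ is bounded in a neighborhood of $b$ and an integrable envelope for $\E[y_d(w,z)]$, then carries continuity through the integral and delivers the pointwise identification of $\tau(b)$.

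For the integrated parameter, the boundary decomposes into the $d$ facets $\mathcal{B}_j = \{z \in \mathcal{B}: z_j = 0\}$, each carrying the $(d-1)$-dimensional density obtained by fixing $z_j = 0$ in the joint density of $Z$ and restricting to $z_{-j} \geq 0$. Because $f(z)$ is identified from the distribution of $Z$, this boundary density is identified as well, and $\tau = \E[Y_1 - Y_0 \mid Z \in \mathcal{B}]$ can be written as the weighted average of $\tau(b)$ across the facets, with weights given by the masses the boundary density assigns to each $\mathcal{B}_j$; identification of $\tau$ then follows from identification of $\tau(b)$ at every $b \in \mathcal{B}$ combined with identification of the weights.

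The main obstacle is not the pointwise limit argument, which mirrors Hahn--Todd--van der Klaauw and \cite{lee2008randomized}, but the multidimensional geometry underlying $\tau$: conditioning on $\mathcal{B}$ is conditioning on a measure-zero event, so ICATE must be defined either through the disintegration of the law of $Z$ along $\mathcal{B}$ or as a limit of $\E[Y_1 - Y_0 \mid d(Z, \mathcal{B}) \leq \varepsilon]$ as $\varepsilon \downarrow 0$. Reconciling these definitions and ensuring the limiting weights are pinned down by $f$ requires some care at the lower-dimensional intersections $\mathcal{B}_j \cap \mathcal{B}_k$, but these have negligible boundary mass under Assumption \ref{ass:cont} and should not affect the conclusion.
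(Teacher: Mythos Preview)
Your proposal is correct and follows essentially the same route as the paper: establish continuity of $z\mapsto\E[Y_d\mid Z=z]$ at each $b\in\mathcal{B}$ via the Bayes-rule representation $f(w\mid z)=f(z\mid w)g(w)/f(z)$ together with Assumptions \ref{ass:cont} and \ref{ass:ycont}, take one-sided limits of the observable regression to recover $\tau(b)$, and then integrate against the boundary density to obtain $\tau$. You are in fact more careful than the paper's own argument on two points it leaves implicit---the dominated-convergence justification for passing the limit through the integral over $\mathcal{W}$, and the interpretation of $\E[\cdot\mid Z\in\mathcal{B}]$ as a disintegration along a measure-zero set---so nothing is missing.
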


The result in Theorem \ref{thm:ident} can be seen as a generalization of \cite{lee2008randomized} that does not require $\mathcal{B}$ to be a singleton and specifies new parameters of interest, suitable for the MRDD (in the single-dimensional case, $\tau(b)$ and $\tau$ coincide).

\subsection{Example} \label{sec:example}

The example depicted in Figure \ref{fig:mrddintro} is helpful to gain some intuition of the model. A scholarship is assigned to students who score above certain thresholds in both math and language tests, and a researcher is interested in the average effect of the scholarship on, for example, the probability of college admission. The scholarship is not randomly assigned, and comparing the average college attendance rates between the treated and untreated groups can be misleading. Even with no effects of the scholarship, for example, we may expect higher college attendance for students who did well on the math and language tests, because of unobservable characteristics, such as effort, correlated with the scores. Consider, however, a combination of math and language scores on the solid purple boundary in the picture. All students are similar in a neighborhood of those scores, except for their treatment status. To compute the CATE $\tau(b)$ of the scholarship for students with that specific combination of math and language scores, the researcher may compare locally treated and untreated units close to that score. The procedure can be applied to every combination of math and language scores at the boundary, and the estimated effects, dependent on the considered point of the boundary, can be aggregated in a unique average treatment effect, the ICATE $\tau$.

I mentioned that Assumption \ref{ass:cont} does not entirely exclude influence over the score vector $Z$: some manipulation is allowed, as long as it is not deterministic. For example, Assumption \ref{ass:cont} is satisfied if, whatever the type $w$, $Z$ can be decomposed as $Z=V + \epsilon$, with $V$ and $\epsilon \in \R^d$, $V$ a deterministic part, perfectly controllable by agents, and dependent on the type $w$, and $\epsilon$ a random component with continuous density. In the scholarship context, imagine two student types: $w_1$ and $w_2$. Students type $w_1$ are not interested in the scholarship, which does not influence their behavior. Assumption \ref{ass:cont} is satisfied for them. Students type $w_2$, instead, want to secure the scholarship but also know that their score in language or math is below the threshold. They decide to study harder to improve their score and win the scholarship. Since studying is costly and they want to minimize their effort, they choose to improve their score to a level $V$ just at the boundary. The density of $V$ is not continuous at $\mathcal{B}$. However, the test score may have a random component $\epsilon$ that students cannot perfectly control: it may be their focus on the test day. Despite all students type $w_2$ having the same $V$, not all get the same score $Z$, and as long as $\epsilon$ has continuous density, $f(z|w_2)$ is continuous at the boundary $\mathcal{B}$. Even in the case of stochastic manipulation, hence, Assumption \ref{ass:cont} is satisfied.

\section{Formulas for $B(x)$, $V(x)$, and $\hat{V}(x)$} \label{app:formulas}
Formulas for $B(x)$, $V(x)$, and $\hat{V}(x)$ are derived by \cite{cattaneo2020simple}.

Let $x_L$ and $x_U$ indicate the lower and the upper bound of the support of $X$: the support does not need to be bounded, and they can be $-\infty$ and $\infty$. First, define the following:
\begin{align*}
    A(x) =& f(x) \int_{\frac{x_L-x}{h}}^{\frac{x_U-x}{h}} r_p(u)r_p(u)'K(u) du \\
    a(x) =& f(x) \frac{F^{(p+1)}(x)}{(p+1)!} \int_{\frac{x_L-x}{h}}^{\frac{x_U-x}{h}} u^{p+1} r_p(u)K(u) du \\
    C(x) =& f(x)^3 \int_{\frac{x_L-x}{h}}^{\frac{x_U-x}{h}} \int_{\frac{x_L-x}{h}}^{\frac{x_U-x}{h}} \min\{u,v\} r_p(u)r_p(v)'K(u)K(v) du dv.
\end{align*}
with $A(x) \in \R^{(p+1)\times(p+1)}$, $a(x) \in \R^{(p+1)}$, $C(x) \in \R^{(p+1)\times(p+1)}$. Bias $B(x)$ and variance $V(x)$ are:
\begin{align*}
    B(x) &= e_1' A(x)^{-1} a(x) \\
    V(x) &= e_1' A(x)^{-1} C(x) A(x)^{-1} e_1.
\end{align*}
Consider the following estimators for $A(x)$ and $C(x)$:
\begin{align*}
    \hat{A}(x) =& \frac{1}{nh} \sum_{i=1}^{n} r_p\left(\frac{x_i -x}{h}\right) r_p\left(\frac{x_i -x}{h}\right)'K\left(\frac{x_i -x}{h}\right) \\
    \hat{C}(x) =& \frac{1}{n^3 h^3} \sum_{i,j,k=1}^{n} r_p\left(\frac{x_j -x}{h}\right) r_p\left(\frac{x_k -x}{h}\right)' K\left(\frac{x_j -x}{h}\right) K\left(\frac{x_k -x}{h}\right) \\
    & \left[ \ind\{x_i \leq x_j\} - \hat{F}(x_j) \right] \left[ \ind\{x_i \leq x_k\} - \hat{F}(x_k) \right]
\end{align*}
and then the estimator for $V(x)$:
\begin{align*}
    \hat{V}_p(x) &= e_1' \hat{A}(x)^{-1} \hat{C}(x) \hat{A}(x)^{-1} e_1.
\end{align*}
Consistency of $\hat{V}_p(x)$ for $V(x)$ is proved in Theorem 2 in \cite{cattaneo2020simple}.

\section{Test with the Max-statistic} \label{app:maxstat}
From the result in Theorem \ref{thm:theta} it is possible to derive the asymptotic distribution of any continuous function applied to the statistic $\hat{\Sigma}^{-\frac{1}{2}} \hat{\theta}$, as the $\ell^p$-norm statistic $\lVert \hat{\Sigma}^{-\frac{1}{2}}\hat \theta \rVert_p$ that I discussed in Section \ref{subsec:manipulation_test}. An interesting case is the $\ell^\infty$-norm, which can be use to construct the manipulation test $\phi_m(\hat{t}_m,\alpha)$ defined as follows:
\begin{gather}
\phi_m(\hat{t}_m,\alpha) = \begin{cases}
1, & \text{if $\hat{t}_m > c_m(\alpha)$}\\
0, & \text{if $\hat{t}_m \leq c_m(\alpha)$}.
\end{cases}
\end{gather}
Here, $\hat{t}_m$ is the max-statistic $\hat{t}_m = \lVert \hat{\Sigma}^{-\frac{1}{2}}\hat \theta \rVert_\infty = \max\left( |\frac{\hat{\theta}_1}{\hat{\sigma}_1}|, \dots, |\frac{\hat{\theta}_d}{\hat{\sigma}_d}| \right)$, and the critical value $c_m(\alpha)$ is the $1-\alpha$ quantile of the distribution of $\max\left( |X_1|, \dots, |X_d| \right)$, where $X \sim \mathcal{N}(0,I_d)$.

Both the MT (the manipulation test using the quadratic form $\hat{t}$ as test statistic) and $\phi_m$ are consistent tests for the implication in Equation \eqref{eq:condition_marginal}, but their power against alternative hypotheses differs. The MT considers an average of statistics $\left( \frac{\hat{\theta}_j}{\hat{\sigma}_j} \right)^2$, and is hence better at rejecting the null hypothesis when manipulation is spread across all the running variables. Taking the maximum over $|\frac{\hat{\theta}_j}{\hat{\sigma}_j}|$, instead, $\phi_m$ has greater power when manipulation occurs for only one running variable.

In Section \ref{sec:bct}, I compared the power of the MT and the BCT when all the running variables are manipulated. The same local power analysis can be used to study an alternative framework, where only one running variable ($j=1$) is discontinuous, such that $\hat{\theta}_1 \rightarrow^d \mathcal{N}(k,1)$ and $\hat{\theta}_j \rightarrow^d \mathcal{N}(0,1)$ for $j\neq1$. This alternative framework depicts a situation where manipulation only occurs for one running variable: this may happen because running variables are different, and some could be impossible to manipulate. In this case, we expect the MT to perform poorly, compared to the test with the max-statistic.

The local power analysis confirms the intuition. Figure \ref{fig:pc_app} shows that in this framework the MT exhibits a lower power than the BCT, as combining information it dilutes the signal from the discontinuous running variable. Nonetheless, the manipulation test $\phi_m$ dominates the BCT: properly aggregating information from all the running variables and testing a unique hypothesis is better than considering separate tests for each $\hat{\theta}_j$ accounting for multiple tests.

\begin{figure}
\begin{center}
\resizebox{\textwidth}{!}{
\includegraphics{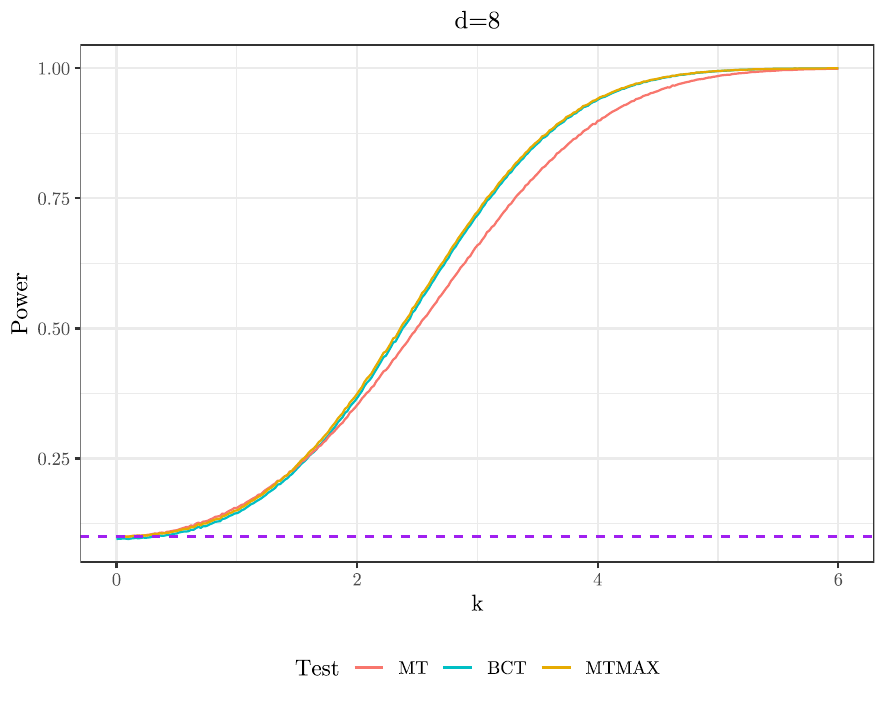}
}
\caption{\small Local asymptotic power curves for the manipulation test proposed in this paper (MT, in red) and the multiple hypotheses test with Bonferroni correction (BCT, in blue), and the manipulation test $\phi_m$ considering the max-statistic (MTMAX, in yellow). The dotted horizontal purple line represents the significance level $\alpha=0.1$. The data distribution is such that only running variable $j=1$ is discontinuous, and $\hat{\theta}_1 \rightarrow^d \mathcal{N}(k,1)$ and $\hat{\theta}_j \rightarrow^d \mathcal{N}(0,1)$ for $j\neq1$.}
\label{fig:pc_app}
\end{center}
\end{figure}

\section{Proofs}
\setcounter{prop}{0}
\setcounter{theorem}{0}
\setcounter{cor}{0}

\subsection*{Proposition \ref{prop:impl}}
\begin{prop}
{\normalfont (Testable implication)}
    Assumption \ref{ass:cont} implies the following condition on observable quantities:
    \begin{gather} \label{eq:test}
    f_{Z_j|Z_{-j}}(z_j|z_{-j}\geq0) \text{ continuous at } z_j=0, \forall j.
    \end{gather}
\end{prop}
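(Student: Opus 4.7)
The plan is to reduce continuity of the conditional marginal density to continuity of an unnormalized sub-density, and then to transport continuity from $f(z\mid w)$ up to the observable marginal via two applications of dominated convergence. The starting point is the elementary identity
\begin{gather*}
f_{Z_j\mid Z_{-j}}(z_j\mid z_{-j}\geq 0) \;=\; \frac{1}{\pi_j}\int_{\{z_{-j}\geq 0\}} f(z_j,z_{-j})\,dz_{-j},
\end{gather*}
where $\pi_j = P(Z_{-j}\geq 0)$ is a strictly positive constant (positivity follows from the bounded-away-from-zero clause of Assumption \ref{ass:cont} combined with the law of total probability). Since $\pi_j$ is constant in $z_j$, the problem reduces to showing that the sub-density $h(z_j):=\int_{\{z_{-j}\geq 0\}} f(z_j,z_{-j})\,dz_{-j}$ is continuous at $z_j=0$.

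Next I would unmix the joint density using $f(z) = \int_{\mathcal{W}} f(z\mid w)g(w)\,dw$ and apply Fubini (justified by nonnegativity) to rewrite
\begin{gather*}
h(z_j) \;=\; \int_{\mathcal{W}} g(w)\,H(z_j\mid w)\,dw, \qquad H(z_j\mid w) \;:=\; \int_{\{z_{-j}\geq 0\}} f(z_j,z_{-j}\mid w)\,dz_{-j}.
\end{gather*}
The key geometric observation is that every point of the form $(0,z_{-j})$ with $z_{-j}\geq 0$ lies on the boundary $\mathcal{B}$, so Assumption \ref{ass:cont} delivers continuity of $f(\cdot\mid w)$ at each such point, for every $w\in\mathcal{W}$.

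The rest is two nested dominated-convergence arguments. First, for each fixed $w$, the pointwise continuity $f(z_j,z_{-j}\mid w)\to f(0,z_{-j}\mid w)$ as $z_j\to 0$ holds for every $z_{-j}\geq 0$; a local integrable dominator (available from the smoothness of $F(\cdot\mid w)$ near $\mathcal{B}$) then transfers this continuity to $H(\cdot\mid w)$ at $0$. Second, I would pass the limit through the outer $w$-integral by dominating $H(z_j\mid w)\leq f_{Z_j\mid W}(z_j\mid w)$ on a neighborhood of $z_j=0$, a bound whose $w$-mixture is the unconditional marginal $f_{Z_j}(z_j)$, hence locally finite. Combining these gives continuity of $h$, and dividing by $\pi_j$ yields the proposition.

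The main obstacle is producing the two integrable dominators that DCT requires, since Assumption \ref{ass:cont} only gives pointwise continuity of $f(\cdot\mid w)$ rather than any locally uniform bound; one either invokes the standard local-uniform-integrability regularity implicit in smoothness assumptions of this type, or inserts a short covering argument to promote the pointwise continuity to a local bound. Once both dominators are in place, the DCT applications chain together cleanly and the remainder is bookkeeping.
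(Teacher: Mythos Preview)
Your proposal is correct and follows essentially the same approach as the paper: both use the mixture representation $f(z)=\int_{\mathcal W} f(z\mid w)g(w)\,dw$, the geometric fact that $\{z:z_j=0,\ z_{-j}\geq 0\}\subset\mathcal B$, and then pass continuity through the two integrals (over $w$ and over $\{z_{-j}\geq 0\}$). The only difference is the order in which you take the two integrals---the paper first mixes over $w$ to get continuity of $f(z)$ and then integrates over $z_{-j}$, whereas you do the reverse---and you are more explicit about the dominated-convergence dominators, which the paper simply asserts.
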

\begin{proof}
    Consider the density of $Z$, $f(Z)$, that can be written as $f(z) = \int_{\mathcal{W}} f(z | w) g(w) dw$.

    By Assumption \ref{ass:cont}, $f(z|w)$ is continuous at $z=b\in\mathcal{B}$, and hence $f(z) \text{ is continuous at } z=b, \forall b \in \mathcal{B}$. The implication is not sharp: $f(z)$ may be continuous even if $f(z|w)$ is not (see \cite{lee2008randomized} for a further discussion on this issue, which analogously arises in the single-dimensional case).
    
    For any $j$, let $f_{-j}(z_{-j})$ be the joint density of $z_{-j} = (z_1,\dots,z_{j-1},z_{j+1},\dots,z_d) \in \R^{d-1}$. By definition of conditional density:
    \begin{gather}
    f_{Z_j|Z_{-j}}(z_j|z_{-j}) = \frac{f_{Z_j,Z_{-j}}(z_j,z_{-j})}{f_{-j}(z_{-j})} =
    \frac{f(z)}{f_{-j}(z_{-j})}
    \end{gather}
    and since $f(z)$ is continuous in $z$ and hence in $z_j$ at all $b \in \mathcal{B}$, so it is $f_{Z_j|Z_{-j}}(z_j|z_{-j})$. Whenever $z_{-j} \geq 0$, $f_{Z_j|Z_{-j}}(z_j|z_{-j})$ is continuous at $z_j=0$, as the set $\{z:z_j=0,z_{-j} \geq 0\}$ is a subset of $\mathcal{B}$.

    By definition of conditional density, $f_{Z_j|Z_{-j}}(z_j|z_{-j}\geq0) = \int_{z_{-j}\geq0} f_{Z_j|Z_{-j}}(z_j|z_{-j}) f_{-j}(z_{-j}) dz_{-j}$: the previous result implies the right-hand side to be continuous at $z_j=0$. This gives the following implication:
    \begin{gather}
    f_{Z_j|Z_{-j}}(z_j|z_{-j}\geq0) \text{ continuous at } z_j=0, \forall j.
    \end{gather}
\end{proof}

\subsection*{Proposition \ref{thm:cons}}

\begin{prop}
{\normalfont (Asymptotic distribution of $\hat{f}_{j,p}(z_j)$)}
Under Assumptions \ref{ass:cont}, \ref{ass:cdf} and \ref{ass:kernel}, with $p=2$, $n h_j^2 \rightarrow \infty$ and $n h_j^{2p+1}=O(1)$, $\hat{f}_{j,p}(z_j)$ is a consistent estimator for $f_{j}(z_j)$. Furthermore,
\begin{gather}
\sqrt{n_j h_j} (\hat{f}_{j,p}(z_j) - f_j(z_j) - h_j^p B(z_j)) \rightarrow^d \mathcal{N}(0,V(z_j))
\end{gather}
where $B(z_j)$ is the asymptotic bias.
\end{prop}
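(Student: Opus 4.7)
The plan is to reduce this statement to the single-dimensional density-at-a-point result of \cite{cattaneo2020simple} (which itself builds on \cite{jones1993simple} and \cite{fan2018local}), applied to the conditional subsample of observations with $z_{-ji}\ge 0$. The key observation is that the estimator $\hat{f}_{j,p}(z_j)$ never uses any information about $z_{-j}$ beyond the indicator $\ind\{z_{-ji}\ge 0\}$, so it is formally equivalent to a one-dimensional local-polynomial CDF-derivative estimator on a single-dimensional subsample.

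First, I would introduce the random index set $\mathcal{I}_n = \{i : z_{-ji}\ge 0\}$, with $|\mathcal{I}_n| = n_j$, and note that conditional on $\mathcal{I}_n$ the observations $\{z_{ji}\}_{i\in\mathcal{I}_n}$ are i.i.d.\ draws from the conditional distribution of $Z_j$ given $Z_{-j}\ge 0$, whose CDF is $F_j(\cdot|z_{-j}\ge 0)$ and whose density is $f_j(\cdot)$. Moreover, the quantity $\tilde F_j(\cdot|z_{-j}\ge 0)$ is exactly the empirical CDF of this subsample. Because of the indicator in the weighted least-squares objective, dropping the observations outside $\mathcal{I}_n$ leaves $\hat{\beta}(z_j)$ unchanged, so $\hat{f}_{j,p}(z_j)$ coincides with the \cite{cattaneo2020simple} estimator (with derivative order $\nu=1$ and polynomial order $p=2$) built from these $n_j$ i.i.d.\ draws.

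Second, I would verify that the assumptions transfer to the conditional sample. Assumption \ref{ass:kernel} supplies the required kernel properties directly. Under Assumption \ref{ass:cdf}, $F$ is four times continuously differentiable in a neighborhood of $\mathcal{B}$; differentiating the identity $F_j(z_j|z_{-j}\ge 0) = \Pr(Z_j\le z_j,\, Z_{-j}\ge 0)/\pi_j$ shows $F_j(\cdot|z_{-j}\ge 0)$ inherits the same smoothness, and Assumption \ref{ass:cont} ensures $f_j(z_j)$ is bounded away from zero at the relevant point. Applying Theorem 1 of \cite{cattaneo2020simple} on the subsample then gives, conditionally on $n_j$,
\begin{equation*}
\sqrt{n_j h_j}\bigl(\hat f_{j,p}(z_j) - f_j(z_j) - h_j^p B(z_j)\bigr) \to^d \mathcal{N}(0, V(z_j)),
\end{equation*}
provided $n_j h_j^2\to\infty$ and $n_j h_j^{2p+1}=O(1)$. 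The stated rate conditions $n h_j^2\to\infty$ and $n h_j^{2p+1}=O(1)$ transfer to their $n_j$ analogues because the SLLN gives $n_j/n \to \pi_j \in (0,1)$ almost surely; Slutsky's theorem then removes the conditioning. Consistency is an immediate corollary since $h_j^p \to 0$ and $\sqrt{n_j h_j}\to\infty$.

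The main obstacle I expect is making the ``conditional on $n_j$'' step rigorous, since \cite{cattaneo2020simple} is stated for a deterministic sample size. I would handle this by an almost-sure subsequence argument: on the probability-one event $\{n_j/n \to \pi_j\}$, the sample size $n_j$ diverges at a known linear rate in $n$, so the CJM bias and variance expansions can be applied term-by-term along this random sequence, with the ratio $n_j/(n\pi_j)\to 1$ absorbed by Slutsky. The remaining ingredients --- the explicit forms of $B(z_j)$ and $V(z_j)$ in Appendix \ref{app:formulas}, and the choice $p=2$ for boundary adaptiveness --- are inherited verbatim from the single-dimensional theory.
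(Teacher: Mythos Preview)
Your proposal is correct and follows essentially the same route as the paper: both arguments observe that $\hat f_{j,p}(z_j)$ coincides with the one-dimensional local-polynomial density estimator of \cite{cattaneo2020simple} applied to the subsample $\{z_{ji}: z_{-ji}\ge 0\}$, use the SLLN to get $n_j/n\to\pi_j>0$ almost surely so that $n_j\to\infty$ and the rate conditions transfer, and then invoke Theorem~1 of \cite{cattaneo2020simple}. If anything, your write-up is more careful than the paper's (you spell out the smoothness transfer and the random-$n_j$ subsequence issue), but the underlying idea is identical.
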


\begin{proof}
The sample size $n_j$ considered by the estimator $\hat{f}_{j,p}(z_j)$ is random. By the law of large numbers, $\frac{n_j}{n} = \frac{\sum_{i=1}^{n} \ind\{z_{-j}\geq 0\}}{n} \rightarrow^{a.s.} \pi_j >0$, and hence $n_j \rightarrow \infty$ with probability 1. With probability 1, then, the proposition is equivalent to the one stated and proved as Theorem 1 in \cite{cattaneo2020simple}.

Their results apply since $\hat{f}_{j,p}(z_j)$ can be written as
\begin{gather*}
\hat{f}_{j,p}(z_j) = e_1'\hat{\beta}(z_j) \\
\hat{\beta}(z_j) = \text{argmin}_{b\in \R^{p+1}} \sum_{i=1}^{n_j}\left[\tilde{F_j}(z_j) - r_p(z_{ji} -z_j)'b \right]^2 K\left(\frac{z_{ji} -z_j}{h_j}\right)
\end{gather*}
if only observations with $z_{-j} \geq 0$ are considered, with $n_j \rightarrow \infty$ with probability 1, $n_j h_j^2 \rightarrow \infty$ and $n_j h_j^{2p+1}=O(1)$.
\end{proof}

\subsection*{Proposition \ref{thm:snorm}}
\begin{prop}
{\normalfont (Asymptotic distribution of $\hat{\theta}_{j,q}$)}
Under Assumptions \ref{ass:cont}, \ref{ass:cdf} and \ref{ass:kernel} holding separately for $\{Z:Z\geq0\}$ and $\{Z:Z_{-j}\geq0,Z_j < 0\}$, with $p=2$, $q=p+1$, $n \min\{h_{j-},h_{j+}\} \rightarrow\infty$, and $n \max\{h_{j-}^{1+2q},h_{j+}^{1+2q}\} \rightarrow 0$, when the implication $\theta_{j}=0$ is true:
\begin{gather}
\frac{1}{\sigma_j} \hat \theta_{j,q} \rightarrow^d \mathcal{N}\left(0, 1 \right)
\end{gather}
where
\begin{gather}
\sigma_j^2 = \frac{\pi_{j+}}{h_{j+} \pi_j n} V_{j+}(0) + \frac{\pi_{j-}}{h_{j-} \pi_j n} V_{j-}(0).
\end{gather}
A consistent estimator $\hat \sigma_j^2$ for $\sigma_j^2$ can be obtained by 
\begin{gather}
\hat \sigma_j^2 =\frac{n_{j+}}{h_{j+} n_j^2} \hat V_{j+,q}(0) + \frac{n_{j-}}{h_{j-} n_j^2} \hat V_{j-,q}(0).
\end{gather}
\end{prop}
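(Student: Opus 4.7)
The plan is to exploit the decomposition $\hat\theta_{j,q} = \tfrac{n_{j+}}{n_j}\hat f_{j+,q}(0) - \tfrac{n_{j-}}{n_j}\hat f_{j-,q}(0)$, and to treat the ``+'' and ``$-$'' pieces as two independent density estimation problems, each governed by Proposition \ref{thm:cons}. First I would observe that the two subsamples $\{i:z_{ji}\geq 0,\,z_{-j,i}\geq 0\}$ and $\{i:z_{ji}<0,\,z_{-j,i}\geq 0\}$ are disjoint and, since the original observations are i.i.d., the two estimators $\hat f_{j+,q}(0)$ and $\hat f_{j-,q}(0)$ are independent conditional on the realized subsample sizes. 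Moreover, by the strong law of large numbers, $n_j/n \to \pi_j$, $n_{j+}/n\to \pi_j\pi_{j+}$, and $n_{j-}/n\to \pi_j\pi_{j-}$ almost surely, so the effective sample sizes diverge at the desired rates.

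Next, I apply Proposition \ref{thm:cons} separately on each half with $q=p+1$ and bandwidths $h_{j\pm}$. The choice $q=p+1$ together with $n\max\{h_{j-}^{1+2q},h_{j+}^{1+2q}\}\to 0$ implies $\sqrt{n_{j\pm} h_{j\pm}}\cdot h_{j\pm}^{q} = O\!\bigl(\sqrt{n h_{j\pm}^{1+2q}}\bigr) \to 0$, so the asymptotic bias $B_{j\pm}(0)$ is negligible after rescaling. Hence
\begin{gather*}
\sqrt{n_{j+}h_{j+}}\bigl(\hat f_{j+,q}(0)-f_{j+}(0)\bigr)\to^d \mathcal{N}(0,V_{j+}(0)),\\
\sqrt{n_{j-}h_{j-}}\bigl(\hat f_{j-,q}(0)-f_{j-}(0)\bigr)\to^d \mathcal{N}(0,V_{j-}(0)),
\end{gather*}
with the two statements jointly independent. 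Under the null $\theta_j=0$, $f_{j+}(0)=f_{j-}(0)$, so the two bias/centering terms cancel in $\hat\theta_{j,q}$ and we are left with a linear combination of two independent, asymptotically normal quantities.

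Using Slutsky together with $n_{j\pm}/n_j\to^{a.s.}\pi_{j\pm}$, the variance of $\tfrac{n_{j+}}{n_j}\hat f_{j+,q}(0)$ is asymptotically $\tfrac{\pi_{j+}}{\pi_j n h_{j+}}V_{j+}(0)$, and similarly for the ``$-$'' piece; summing by independence recovers the stated $\sigma_j^2$. Asymptotic normality of $\hat\theta_{j,q}/\sigma_j$ then follows from the independent sum of two asymptotically normal terms (or equivalently, convergence of each scaled difference to an independent Gaussian, followed by linearity). For the consistency of $\hat\sigma_j^2$, I invoke consistency of $\hat V_{j\pm,q}(0)$ for $V_{j\pm}(0)$ (Theorem 2 of \cite{cattaneo2020simple}, cited in Appendix \ref{app:formulas}), combined again with $n_{j\pm}/n\to^{a.s.}\pi_j\pi_{j\pm}$ and $n_j/n\to^{a.s.}\pi_j$, so that each term of $\hat\sigma_j^2$ converges in probability to the corresponding term of $\sigma_j^2$.

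The main obstacle, and the only step that requires care rather than bookkeeping, is the bias argument: one must confirm that replacing the local polynomial estimator of order $p$ by the order $q=p+1$ estimator evaluated at the MSE-optimal bandwidth $h_{j,p}$ really does push the leading bias to an order that is $o(1)$ after multiplication by $\sqrt{n_{j\pm} h_{j\pm}}$. This is precisely the robust bias correction of \cite{calonico2018effect,calonico2022coverage}, and the bandwidth rate condition $n\max\{h_{j-}^{1+2q},h_{j+}^{1+2q}\}\to 0$ is exactly what ensures it. Everything else reduces to Slutsky, independence of disjoint i.i.d.\ subsamples, and a direct appeal to Proposition \ref{thm:cons}.
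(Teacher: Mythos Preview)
Your proposal is correct and follows essentially the same approach as the paper: the paper's proof simply notes that $n_j,n_{j+},n_{j-}\to\infty$ almost surely by the law of large numbers and then invokes Corollary~1 in the appendix of \cite{cattaneo2020simple} directly, while you unpack that corollary by spelling out the independence of the two disjoint subsamples, the bias vanishing under the rate condition $n\max\{h_{j-}^{1+2q},h_{j+}^{1+2q}\}\to 0$, and the Slutsky steps for the random sample-size ratios. The consistency argument for $\hat\sigma_j^2$ is identical in both.
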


\begin{proof}
As for Proposition \ref{thm:cons}, the effective samples sizes $n_{j}$, $n_{j+}$ and $n_{j-}$ are stochastic. By the law of large numbers, $\frac{n_j}{n} \rightarrow^{a.s.} \pi_j>0$, $\frac{n_{j+}}{n_j} \rightarrow^{a.s.} \pi_{j+}>0$, and $\frac{n_{j-}}{n_j} \rightarrow^{a.s.} \pi_{j-}>0$, and hence $n_j \rightarrow \infty$, $n_{j+} \rightarrow \infty$, and $n_{j-} \rightarrow \infty$ with probability 1. Then, with probability 1, this proposition is equivalent to the result stated and proved as Corollary 1 in the appendix of \cite{cattaneo2020simple}.

Since $\hat V_{j+,q}(0) \to^p V_{j+}(0)$, $\hat V_{j-,q}(0) \to^p V_{j+}(0)$, $\frac{n_{j+}}{h_{j+} n_j^2} \to^p \frac{\pi_{j+}}{h_{j+} \pi_j n}$, and $\frac{n_{j-}}{h_{j-} n_j^2} \to^p \frac{\pi_{j-}}{h_{j-} \pi_j n}$, the Slutsky theorem implies $\hat \sigma_j^2 \to^p \sigma_j^2$.
\end{proof}

\subsection*{Theorem \ref{thm:theta}}
\begin{theorem} 
{\normalfont (Asymptotic distribution of $\hat{\theta}$)}
Under Assumptions \ref{ass:cont}, \ref{ass:cdf} and \ref{ass:kernel} holding separately for $\{Z:Z\geq0\}$ and $\{Z:Z_{-j}\geq0,Z_j < 0\}$ for all $j$, with $p=2$, $q=p+1$, $n \min\{h_{j-},h_{j+}\} \rightarrow\infty$ and $n \max\{h_{j-}^{1+2q},h_{j+}^{1+2q}\} \rightarrow 0$ for all $j$, when $\theta=0$,
\begin{gather*}
\hat{\Sigma}^{-\frac{1}{2}}\hat \theta \rightarrow^d \mathcal{N}(0,I).
\end{gather*}
where $\hat{\Sigma}_{jj} = \hat{\sigma}^2_j$ as defined in Proposition \ref{thm:snorm}, and $\hat{\Sigma}_{ji}=0$ for all $i \neq j$.
\end{theorem}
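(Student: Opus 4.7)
\textbf{Proof plan for Theorem \ref{thm:theta}.}

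My plan is to apply the Cramér–Wold device: it suffices to show that for any fixed $\lambda \in \R^d$ with $\lambda'\lambda = 1$, the scalar $\lambda' \hat{\Sigma}^{-1/2} \hat{\theta}$ converges in distribution to $\mathcal{N}(0,1)$. The univariate marginal convergence of each $\hat{\theta}_{j,q}/\sigma_j$ is already handled by Proposition \ref{thm:snorm}, so the bulk of the work is to control the cross-terms that appear in the variance of the linear combination.

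First, I would invoke the asymptotic linear representation that underlies the proof of Proposition \ref{thm:snorm} (which in turn follows from Corollary 1 in \cite{cattaneo2020simple}): under $\theta_j = 0$, one can write $\hat{\theta}_{j,q} = \frac{1}{n}\sum_{i=1}^n \eta_{j,i} + o_p(\sigma_j)$, with $\{\eta_{j,i}\}_{i=1}^n$ i.i.d., mean zero, and supported on the intersection of a slab of width $O(\max\{h_{j+},h_{j-}\})$ around $\{z_j = 0\}$ with the half-space $\{z_{-j} \geq 0\}$. Under the null, the bias vanishes and $\mathrm{Var}(\eta_{j,i}) = n \sigma_j^2 (1+o(1))$. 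Stacking across $j$, the vector $\hat{\theta}$ is approximated by an i.i.d. sum $\frac{1}{n}\sum_i \eta_i$ of $\R^d$-valued random vectors.

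Second, I would bound the off-diagonal entries of $\mathrm{Cov}(\hat{\theta})$. For $j \neq k$, the product $\eta_{j,i}\eta_{k,i}$ can be nonzero only when $Z_i$ lies simultaneously in the $h_j$-slab around $\{z_j = 0, z_{-j}\geq 0\}$ and in the $h_k$-slab around $\{z_k = 0, z_{-k}\geq 0\}$. Intersecting these two conditions forces all coordinates other than $j,k$ to be nonnegative and both $z_j$ and $z_k$ to lie in intervals of length $O(h_j)$ and $O(h_k)$ respectively; by Assumption \ref{ass:cdf} this intersection has $F$-mass $O(h_j h_k)$. Combined with the pointwise bounds $|\eta_{j,i}| = O(1/h_j)$ and $|\eta_{k,i}|=O(1/h_k)$ coming from the kernel normalization, I get $|E[\eta_{j,i}\eta_{k,i}]| = O(1)$ and hence $|\mathrm{Cov}(\hat{\theta}_j,\hat{\theta}_k)| = O(1/n)$. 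Since $\sigma_j \sigma_k$ is of order $1/(n\sqrt{h_j h_k})$, the asymptotic correlation between $\hat{\theta}_j$ and $\hat{\theta}_k$ is $O(\sqrt{h_j h_k}) \to 0$. By Proposition \ref{thm:snorm} and Slutsky, this implies $\hat{\Sigma}^{-1/2}\mathrm{Cov}(\hat{\theta})\hat{\Sigma}^{-1/2} \to I$.

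Third, applying Cramér–Wold, I would write $\lambda' \hat{\Sigma}^{-1/2} \hat{\theta} = \frac{1}{n} \sum_i \lambda' \hat{\Sigma}^{-1/2} \eta_i + o_p(1)$ and verify a triangular-array Lindeberg CLT for this scalar i.i.d. sum; the Lindeberg condition reduces to a standard truncation estimate because $K$ has bounded support (Assumption \ref{ass:kernel}) and the effective rescaled contributions $\sqrt{h_j}\,\eta_{j,i}$ have uniformly bounded moments. The limiting variance equals $\lambda' I \lambda = 1$ by step two, closing the argument. The main obstacle will be making the localization step rigorous: the local polynomial form of $\hat{f}_{j,p}(0)$ weights observations through $A(0)^{-1}$ and higher kernel moments rather than a raw indicator, so pinning down the support and the $O(1/h_j)$ sup-bound of the influence function $\eta_{j,\cdot}$ requires care; once that is in hand, the geometric fact that the $d$ faces of $\mathcal{B}$ meet only on $(d-2)$-dimensional sets delivers the asymptotic decoupling essentially for free.
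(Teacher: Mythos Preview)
Your plan is sound and reaches the same conclusion through a genuinely different mechanism than the paper. Both arguments rest on the same geometric fact---that the observations which can simultaneously influence $\hat\theta_j$ and $\hat\theta_k$ live in the intersection of two thin slabs and therefore have probability $O(h_j h_k)$---but they exploit it differently. You linearize each $\hat\theta_{j,q}$ into an i.i.d.\ sum of influence functions, bound the off-diagonal covariance directly as $O(1/n)$ so that the standardized correlation is $O(\sqrt{h_j h_k})\to 0$, and then finish with Cram\'er--Wold and a Lindeberg CLT. The paper never linearizes and never invokes Cram\'er--Wold: instead it works with the local polynomial estimator in its matrix form $\hat\beta_j = H^{-1}(X'KX)^{-1}X'KY$, partitions the rows of $X$ and $Y$ into ``overlap'' and ``non-overlap'' blocks, and shows that the overlap block contributes a term of relative order $h$ at the $\sqrt{n_j h_j}$ scale, so that $\hat f_{1+,p}(0)$ is asymptotically equivalent to an auxiliary $\hat f_{1+,p}^*(0)$ computed on the non-overlap subsample alone. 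Since $\hat f_{1+,p}^*(0)$ and $\hat f_{2+,p}(0)$ are built from disjoint observations they are exactly independent, and pairwise asymptotic independence of the (marginally normal) coordinates delivers joint normality directly.

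Your route is the more ``textbook'' one and has the virtue of making the vanishing correlation rate explicit; the paper's splitting argument is more elementary in that it never needs to extract an influence function for the \cite{cattaneo2020simple} estimator and operates entirely at the level of the normal equations. On the localization concern you flag at the end: it is real but benign. Because $\tilde F_j$ enters as the response in a local polynomial regression with an intercept, any observation $m$ with $z_{jm}$ outside the kernel window shifts all the relevant $\tilde F_j(z_{ji})$ by the same constant, which moves only the intercept coefficient and leaves $e_1'\hat\beta_j$ untouched; hence the leading influence function really is supported on the slab, and your $O(1/h_j)$ sup-bound and covariance calculation go through.
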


\begin{proof}

Proposition \ref{thm:snorm} derives the univariate asymptotic distribution of $\frac{1}{\sigma_j}\hat \theta_j$. Consider any pair of elements of $\hat{\Sigma}^{-\frac{1}{2}} \hat \theta$. Showing that they are asymptotically independent proves the theorem, as independent normal distributions are jointly normal.

Without loss of generality, consider $\frac{1}{\hat{\sigma_1}} \hat \theta_1$ and $\frac{1}{\hat{\sigma_2}} \hat \theta_2$, where:
\begin{gather} 
\hat\theta_{1,p} = \frac{n_{1+}}{n_1} \hat{f}_{1+,p}(0) - \frac{n_{1-}}{n_1} \hat{f}_{1-,p}(0) \\
\hat\theta_{2,p} = \frac{n_{2+}}{n_2} \hat{f}_{2+,p}(0) - \frac{n_{2-}}{n_2} \hat{f}_{2-,p}(0).
\end{gather}

In finite samples, $\hat\theta_{1,p}$ and $\hat\theta_{2,p}$ are not independent: $\hat f_{1+,p}(0)$ and $\hat f_{2+,p}(0)$ are computed with different but overlapping sets of observations. For an intuition, consider the bi-dimensional space of the two running variables: each estimator gives non-zero weights to observations in a stripe of width $h_{1+}$ or $h_{2+}$ close to the boundary. Close to the origin, the stripes overlap. $\hat f_{1+,p}(0)$ and $\hat f_{2+,p}(0)$ considers a number of observations proportional to $n_{1+} h_{1+}$ and $n_{1+} h_{2+}$, while the shared number is proportional to $n_{1+} h_{1+} h_{2+}$. Note that, under the assumptions on rates of convergence, $n_{1+} h_{1+} h_{2+} \rightarrow \infty$.

Write $\frac{1}{\hat{\sigma_1}} \hat \theta_1$ as:
\begin{gather}
\frac{1}{\hat{\sigma_1}} \hat \theta_1 = 
\left(\frac{1}{n_1 h_{1+}} \frac{n_{1+}}{n_1} \hat{V}_{1+,q}(0) +\frac{1}{n_1 h_{1-}}\frac{n_{1-}}{n_1} \hat{V}_{1-,q}(0) \right)^{-\frac{1}{2}} \hat \theta_1 = \\
\frac{n_{1}}{n_{1+}}\underbrace{\frac{n_{1+}}{n_1}\left(\frac{n_{1+}h_{1+}}{n_1 h_{1+}} \frac{n_{1+}}{n_1} \hat{V}_{1+,q}(0) +\frac{n_{1+}h_{1+}}{n_1 h_{1-}}\frac{n_{1-}}{n_1} \hat{V}_{1-,q}(0) \right)^{-\frac{1}{2}}}_{\hat{M}_{n_1}}(n_{1+}h_{1+})^{\frac{1}{2}}\hat \theta_1.
\end{gather}
To prove asymptotic independence of $\frac{n_{1}}{n_{1+}}\hat{M}_{n_1}(n_{1+}h_{1+})^{\frac{1}{2}}\hat \theta_1$ and $\frac{n_{2}}{n_{2+}}\hat{M}_{n_2}(n_{2+}h_{2+})^{\frac{1}{2}}\hat \theta_2$, I need to show that $\hat{M}_{n_1}(n_{1+}h_{1+})^{\frac{1}{2}}\hat f_{1+,p}(0)$ and $\hat{M}_{n_2}(n_{2+}h_{2+})^{\frac{1}{2}}\hat \hat f_{2+,p}(0)$ are independent.

Define $\hat f_{1+,p}^*(0)$ and $\hat{M}_{n_1}^*$ as the estimators analogous to $\hat f_{1+,p}(0)$ and $\hat{M}_{n_1}$ that consider only observations not in the overlapping region. I will show that $|\hat{M}_{n_1}(n_{1+}h_1)^\frac{1}{2}\hat f_{1+,p}(0) -\hat{M}_{n_1}^*(n_{1+}h_1)^\frac{1}{2}\hat f_{1+,p}^*(0)|\rightarrow 0$, and prove the theorem, since $\hat f_{1+,p}^*(0)$ and $\hat f_{2+,p}(0)$, and $\hat{M}_{n_1}^*$ and $\hat{M}_{n_2}$, are independent.

In the proof, where not necessary, I omit the subscripts $+$ and $p$, and the argument $z_j=0$, and consider vector $\hat \beta_j$, where $\hat f_{j+,p}^*(z_j) = e_1'\hat \beta_j (z_j)$. The local linear estimator can be written in the matrix form:
\begin{gather}
\hat \beta_j = H^{-1}(X'KX)^{-1}X'KY
\end{gather}
where $$H=\text{diag}\left(h_j^0,h_j^1,\dots,h_j^p\right)$$ $$X=\left[\left(\frac{z_{ji}}{h_j}\right)^m\right]_{1\leq i \leq n_{j}, 0 \leq m \leq p}$$ $$K=\text{diag}\left(\left[\frac{1}{h_j}K\left(\frac{z_{ji}}{h_j}\right)\right]_{1\leq i \leq n_{j+}} \right)$$
$$Y=\left(\tilde{F}_j(z_{ji})\right)_{1\leq i \leq n_{j}} $$
Matrices $X$ and $Y$ can be decomposed as $X=A+B$ and $Y=C+D$, where $A$ and $C$ have rows of zeros in correspondence with the overlapping observations. In contrast, $B$ and $D$ have rows of zero in correspondence of non-overlapping ones. Note that
\begin{gather}
\sqrt{n_{j}h_j} \hat \beta_j = \sqrt{n_{j}h_j} H^{-1}(A'KA + B'KB)^{-1}(A'KC + B'KD) = \\
H^{-1}\left(\underbrace{\frac{(n_{j}-n_jh_j)h_j}{n_jh_j}}_{\rightarrow1}
\underbrace{\frac{1}{(n_{j}-n_jh_j)h_j}A'KA}_{\rightarrow^p O(1)} + \underbrace{\frac{n_jh_j^2}{n_jh_j}}_{\rightarrow0}
\underbrace{\frac{1}{n_jh_j^2}B'KB}_{\rightarrow^p O(1)}\right)^{-1} \\
\left(\underbrace{\frac{\sqrt{(n_{j}-n_jh_j)h_j}}{\sqrt{n_jh_j}}}_{\rightarrow1}
\underbrace{\frac{1}{\sqrt{(n_{j}-n_jh_j)h_j}}A'KC}_{\rightarrow^p O(1)} + \underbrace{\frac{\sqrt{n_jh_j^2}}{\sqrt{n_jh_j}}}_{\rightarrow0}
\underbrace{\frac{1}{\sqrt{n_jh_j^2}}B'KD}_{\rightarrow^p O(1)}\right) \\
\rightarrow \sqrt{n_{j}h_j} H^{-1}(A'KA)^{-1}(A'KC) = \sqrt{n_{j}h_j} \hat \beta_j^*
\end{gather}

This demonstrates that $\left|\sqrt{n_{j}h_j} \hat \beta_j - \sqrt{n_{j}h_j} \hat \beta_j^*\right|\rightarrow 0$. Analogously, it can be shown that $\hat{M}_{n_1} \rightarrow M_1$ and $\hat{M}_{n_1}^* \rightarrow M_1$, with $M_1= \pi_{1+}\left(\pi_{1+}^2 V_{1+,q}(0) + \pi_{1+}\pi_{1-} \frac{h_{1+}}{h_{1-}} V_{1-,q}(0) \right)^{-\frac{1}{2}}$. The result is hence obtained: $$\left|\hat{M}_{n_1}(n_{j+}h_j)^\frac{1}{2}\hat f_{1+,p}(0) -\hat{M}_{n_1}^*(n_{j+}h_j)^\frac{1}{2}\hat f_{1+,p}^*(0)\right|\rightarrow 0.$$

Since $\hat \theta_j$ and $\hat \theta_i$ are asymptotically independent, $\hat \Sigma_{ji} = 0$ for all $j\neq i$.
\end{proof}

\subsection*{Corollary \ref{thm:coroll}}
\begin{cor}
{\normalfont (Manipulation test)}
Let $H_0$ be the null hypothesis reported in Equation \eqref{eq:condition_marginal}. Under the assumptions of Theorem \ref{thm:theta}, when $H_0$ is true:
\begin{gather}
\lim_{n\rightarrow \infty} P(\phi(\hat t,\alpha)=1) = \alpha.
\end{gather}
When $H_0$ is false:
\begin{gather}
\lim_{n\rightarrow \infty} P(\phi(\hat t,\alpha)=1) = 1.
\end{gather}
\end{cor}

\begin{proof}
The proof for the case of true null hypothesis immediately follows from Theorem \ref{thm:theta}: asymptotically, $\hat \Sigma^{-\frac{1}{2}}\hat \theta$ is distributed as a multinomial standard normal, and hence the quadratic form $\hat t$ is such that $\hat t = \hat \theta' \hat \Sigma^{-1} \hat \theta \rightarrow^d \chi^2_d$.

For the case when $H_0$ is false, note that it means that at least one of the conditional marginal densities is not continuous: it exists a $j$ such that $\theta_j \neq 0$, and hence $\frac{1}{\hat{\sigma}_j} \hat{\theta}_{j,q} \to \infty$. It implies $\hat{t} \to \infty$, and then $\lim_{n\rightarrow \infty} P(\phi(\hat t,\alpha)=1) = 1$.
\end{proof}

\subsection*{Theorem \ref{thm:ident}}
\begin{theorem} 
{\normalfont (Identification of $\tau(b)$ and $\tau$)}
Under Assumptions \ref{ass:cont} and \ref{ass:ycont}, $\tau(b)$ and $\tau$ are identified.
\end{theorem}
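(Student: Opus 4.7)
\bigskip

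\noindent\textbf{Proof proposal.} The plan is to extend the single-dimensional identification argument of Lee (2008) to an arbitrary point $b$ on the multidimensional boundary $\mathcal{B}$ and then aggregate pointwise identification into identification of the integrated parameter. The key observation is that, near any $b\in\mathcal{B}$, approaching $b$ from inside $\mathcal{T}$ yields $D=1$ and approaching $b$ from inside $\mathcal{T}^C$ yields $D=0$, so the observed outcome $Y$ coincides with $Y_1$ on one side and with $Y_0$ on the other. I will therefore aim to show
\begin{gather*}
\tau(b) \;=\; \lim_{z\to b,\, z\in\mathcal{T}} \E[Y\mid Z=z] \;-\; \lim_{z\to b,\, z\in\mathcal{T}^C} \E[Y\mid Z=z],
\end{gather*}
where the right-hand side depends only on the joint distribution of $(Z,Y)$ and is therefore identified.

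\medskip

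\noindent The first step is to condition on the unobservable type $W$ and write, for any $z$,
\begin{gather*}
\E[Y_d\mid Z=z] \;=\; \int_{\mathcal{W}} \E[y_d(w,z)]\, f(w\mid z)\, dw, \qquad d\in\{0,1\},
\end{gather*}
and to apply Bayes' rule, $f(w\mid z) = g(w) f(z\mid w)/f(z)$, with $f(z) = \int g(w) f(z\mid w)\, dw$. By Assumption~\ref{ass:cont}, $f(z\mid w)$ is continuous in $z$ at $b$ for every $w$ and is bounded away from zero there; with a dominated-convergence argument this lifts to continuity of $f(z)$ at $b$ and continuity of $f(w\mid z)$ at $b$ for every $w$. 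Combined with Assumption~\ref{ass:ycont}, which gives continuity of $\E[y_d(w,z)]$ at $b$ for every $w$, another dominated-convergence step yields
\begin{gather*}
\lim_{z\to b} \E[Y_d\mid Z=z] \;=\; \int_{\mathcal{W}} \E[y_d(w,b)]\, f(w\mid b)\, dw \;=\; \E[Y_d\mid Z=b].
\end{gather*}
The limits from within $\mathcal{T}$ and within $\mathcal{T}^C$ separately exist and equal this common value because the integrands in question do not depend on the side of approach once $z$ itself, not $D$, is being conditioned on.

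\medskip

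\noindent The second step ties these limits back to the observable distribution. For $z\in\mathcal{T}$ we have $D(z)=1$, so $\E[Y\mid Z=z] = \E[Y_1\mid Z=z]$; analogously, for $z\in\mathcal{T}^C$, $\E[Y\mid Z=z] = \E[Y_0\mid Z=z]$. Taking one-sided limits $z\to b$ and invoking the previous step gives
\begin{gather*}
\lim_{z\to b,\, z\in\mathcal{T}} \E[Y\mid Z=z] = \E[Y_1\mid Z=b], \qquad \lim_{z\to b,\, z\in\mathcal{T}^C} \E[Y\mid Z=z] = \E[Y_0\mid Z=b],
\end{gather*}
so subtracting identifies $\tau(b) = \E[Y_1-Y_0\mid Z=b]$ from the observed distribution of $(Z,Y)$. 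Identification of the ICATE $\tau = \E[Y_1-Y_0\mid Z\in\mathcal{B}]$ then follows by integrating the identified function $b\mapsto\tau(b)$ against the induced distribution of $Z$ conditional on $Z\in\mathcal{B}$ on the boundary (a $(d-1)$-dimensional surface), using the surface measure implied by the continuous density $f$ on either side of $\mathcal{B}$.

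\medskip

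\noindent The main obstacle is the interchange of limits and integrals that supports lifting Assumptions~\ref{ass:cont} and~\ref{ass:ycont} from ``for each $w$'' to statements about the integrated conditional expectations. This requires an appropriate dominating function, which in turn uses the boundedness-away-from-zero clause in Assumption~\ref{ass:cont} (to control $1/f(z)$ uniformly near $b$) together with a mild integrability condition on $y_0,y_1$. A subtler point is defining the conditioning event $\{Z\in\mathcal{B}\}$ in $\tau$, since $\mathcal{B}$ has Lebesgue measure zero in $\R^d$; the natural resolution is to interpret $\E[\cdot\mid Z\in\mathcal{B}]$ via the induced $(d-1)$-dimensional distribution on $\mathcal{B}$ and to verify that integrating $\tau(b)$ against this distribution is well defined under Assumption~\ref{ass:cont}.
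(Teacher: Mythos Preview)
Your proposal is correct and follows essentially the same route as the paper's proof: expand $\E[Y_d\mid Z=z]$ via $W$ using Bayes' rule, invoke Assumptions~\ref{ass:cont} and~\ref{ass:ycont} to obtain continuity of $z\mapsto \E[Y_d\mid Z=z]$ at $b$, then take one-sided limits within $\mathcal{T}$ and $\mathcal{T}^C$ where $Y$ coincides with $Y_1$ and $Y_0$ respectively, and finally integrate over $\mathcal{B}$ for $\tau$. If anything, you are more explicit than the paper about the dominated-convergence step and the measure-zero nature of $\{Z\in\mathcal{B}\}$, which the paper leaves implicit.
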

\begin{proof}
First, prove that $\E[Y_0|Z=z]$ and $\E[Y_1|Z=z]$ are continuous at $z=b\in\mathcal{B}$, where the expectation is taken with respect to random variables $W$, $Z$, and $y_0(w,z)$ and $y_1(w,z)$. Consider $\E[Y_0|Z=z]$:
\begin{align}
\E[Y_0|Z=z] =& \E[y_0(W,Z)|Z=z] = \int_{\mathcal{W}}\E[y_0(w,z)]f(w|z)dw = \\ =& \int_{\mathcal{W}}\E[y_0(w,z)]f(z|w)\frac{g(w)}{f(z)}dw = \\ =&
\int_{\mathcal{W}}\E[y_0(w,z)]f(z|w)\frac{g(w)}{\int_{\mathcal{W}} f(z|w)g(w)dw}dw.
\end{align}
By Assumption \ref{ass:cont}, $f(z|w)$ is continuous at $z=b\in\mathcal{B}$, and by Assumption \ref{ass:ycont}, $\E[y_0(w,z)]$ is continuous at $z=b\in\mathcal{B}$. Hence, $\E[Y_0|Z=z]$ is continuous at $z=b\in\mathcal{B}$. The proof for $\E[Y_1|Z=z]$ is analogous.

For any $b\in \mathcal{B}$, the Conditional Average Treatment Effect can be written as
\begin{align*}
\tau(b) =& \E[Y_1 -Y_0 |Z=b] = \E[Y_1 |Z=b] - \E[Y_0 |Z=b] = \\ =&
\lim_{z\rightarrow b} \E[Y_1|Z=z] - \lim_{z\rightarrow b} \E[Y_0|Z=z] = \\ =&
\lim_{z\rightarrow b, z \in \mathcal{T}} \E[Y|Z=z] - \lim_{z\rightarrow b, z \in \mathcal{T^C}} \E[Y|Z=z].
\end{align*}
In the last expression, the limits for $z \rightarrow b$ of $\E[Y|Z=z]$ are taken considering any sequence of $z \in \mathcal{T}$ and $z \in \mathcal{T^C}$. The last equality comes from the fact that, for sequences in $\mathcal{T}$, $Y=Y_1$, and for sequences in $\mathcal{T^C}$, $Y=Y_0$.

$Y$ and $Z$ are observable, and hence $\tau(b)$ is identified.

The Integrated Conditional Average Treatment Effect is then identified as
\begin{gather*}
\tau = \int_{b \in \mathcal{B}} \tau(b) f(b|Z\in \mathcal{B}) db = \frac{\int_{b \in \mathcal{B}} \tau(b) f(b) db}{\int_{b \in \mathcal{B}} f(b) db}.
\end{gather*}
\end{proof}

\end{document}